\documentclass[sigconf, nonacm]{acmart}
\usepackage{xspace}
\usepackage{tabularx}
\newcolumntype{C}{>{\centering\arraybackslash}X} 
\usepackage{tcolorbox}
\tcbuselibrary{breakable} 
\newcommand{\sys}{\textsc{Scout}\xspace}

\usepackage{etoolbox}
\newtoggle{longversion}
\toggletrue{longversion}
\newcommand{\longonly}[1]{\iftoggle{longversion}{#1}{}}
\newcommand{\shortonly}[1]{\iftoggle{longversion}{}{#1}}
\makeatletter
\providecommand\fps@figure{tbp}
\providecommand\fps@dblfigure{tbp}
\iftoggle{longversion}%
  {\renewcommand\fps@figure{!t}\renewcommand\fps@dblfigure{!t}}%
  {\renewcommand\fps@figure{t}\renewcommand\fps@dblfigure{t}}%
\makeatother

\theoremstyle{acmdefinition}
\newtheorem{problem}{Problem}

\theoremstyle{acmplain}
\newtheorem{property}{Property}

\definecolor{promptblue}{RGB}{36,94,166}
\newcommand{\ph}[1]{\textcolor{promptblue}{[#1]}}

\usepackage{pgfplots}
\pgfplotsset{compat=1.17}
\usetikzlibrary{arrows.meta}

\definecolor{ansblue}{RGB}{0,90,200}
\definecolor{fig1green}{RGB}{100,180,50} 
\newcommand{\exhl}[1]{{\color{ansblue}\itshape #1}}   
\newcommand{\exarrow}{\par\vspace{-2.2mm}\centerline{\tikz{\draw[-{Latex[length=1.6mm]},line width=0.9pt] (0,0)--(0,-0.20);}}\vspace{-2.2mm}\par}

\newcommand{\exq}[2]{\begin{tcolorbox}[exquery]\textbf{#1:}~\textit{#2}\end{tcolorbox}} 
\tcbset{
  exquery/.style={colback=white, colframe=black!45, boxrule=0.4pt, arc=1.5pt,
    boxsep=0.5pt, left=3pt,right=3pt,top=1pt,bottom=1pt, halign=left, fontupper=\scriptsize},
  exsrcok/.style={colback=fig1green!45, colframe=fig1green!80!black, boxrule=0.8pt, arc=1.5pt,
    boxsep=0.5pt, left=3pt,right=3pt,top=1pt,bottom=1pt, fontupper=\scriptsize, valign=top},
  exsrcno/.style={colback=red!7, colframe=red!70!black, boxrule=0.6pt, arc=1.5pt,
    boxsep=0.5pt, left=3pt,right=3pt,top=1pt,bottom=1pt, fontupper=\scriptsize, valign=top},
  exsrc/.style={colback=white, colframe=black!55, boxrule=0.6pt, arc=1.5pt,
    boxsep=0.5pt, left=3pt,right=3pt,top=2pt,bottom=2pt, fontupper=\scriptsize, valign=top},
  excode/.style={colback=white, colframe=black!40, boxrule=0.4pt, arc=1pt,
    boxsep=0.5pt, left=3pt,right=3pt,top=1.5pt,bottom=1.5pt, fontupper=\scriptsize, valign=top,
    before skip=2pt, after skip=2pt},
}

\usepackage[ruled]{algorithm2e}

\usepackage{listings}
\lstdefinestyle{rulecode}{
  language=Python,
  basicstyle=\footnotesize\ttfamily,
  commentstyle=\color{black!55}\itshape,
  keywordstyle=\bfseries,
  stringstyle=\ttfamily,
  showstringspaces=false,
  breaklines=true,
  keepspaces=true,
  columns=fullflexible,
  aboveskip=2pt,belowskip=2pt,
}

\newcommand\vldbdoi{XX.XX/XXX.XX}
\newcommand\vldbpages{XXX-XXX}
\newcommand\vldbvolume{XX}
\newcommand\vldbissue{X}
\newcommand\vldbyear{2027}
\newcommand\vldbauthors{\authors}
\newcommand\vldbtitle{\shorttitle}
\newcommand\vldbavailabilityurl{https://github.com/yiminl18/SCOUT}
\newcommand\vldbpagestyle{plain}

\setcopyright{none}
\renewcommand\footnotetextcopyrightpermission[1]{}

\begin{document}
\title{\sys: Scalable Document Extraction via Data Similarity}

\author{Yiming Lin$^{*}$, Chiyu Hao$^{\dagger}$, Shreya Shankar$^{*}$, Aditya G. Parameswaran$^{*}$}
\affiliation{%
  \institution{$^{*}$UC Berkeley, $^{\dagger}$Shanghai Jiao Tong University \\ \{yiminglin, shreyashankar, adityagp\}@berkeley.edu, haochiyu84@gmail.com}
  \country{}
}

\begin{abstract}
Extracting values from large document collections powers data analysis across a variety of domains. Frontier LLMs extract such values accurately, but processing an entire collection using one is prohibitively costly. Yet this cost is largely avoidable: real-world collections exhibit rich similarity, so for the same query over a set of similar documents, the answer tends to recur in similar locations; 
an LLM need only read that small span, not the whole document. Prior methods that exploit this similarity fall short: they either assume a rigid document structure, or assume the answer is a set of substrings of the input and use an LLM-generated program to return it directly. Even a frontier agent fails to generate effective programs to directly locate the answer's span, as the search space is large and programs learned from a small sample tend to be  overfitted. We present \sys, a tool that generates accurate and cost-effective programs (that we call rules) to extract data at scale. From a few sampled documents, \sys generates a broad rule set and refines it by selecting a {\em pareto-optimal} subset with low cost without sacrificing accuracy. We prove this rule refinement to be NP-hard and present a greedy solution with a provable approximation guarantee. \sys can handle collections that are only partly similar, where similarity holds within clusters of documents. For such a setting, a sampling strategy, using no LLM, extracts samples from each cluster of similar documents; and a cascade strategy selects a subset of refined rules, falling back to the unrefined rule set when the selected rules don't contain the answer. Experiments on six real-world datasets show that \sys matches the accuracy of the strongest baseline, a frontier LLM agent that reads each full document, while being $61\times$ to over $1000\times$ cheaper on a collection of $1{,}000$ documents, and is $61\%$ more accurate than the strongest prior program-based approach.

\end{abstract}

\maketitle

\vspace{-2mm}
\pagestyle{\vldbpagestyle}
\begingroup\small\noindent\raggedright\textbf{PVLDB Reference Format:}\\
\vldbauthors. \vldbtitle. PVLDB, \vldbvolume(\vldbissue): \vldbpages, \vldbyear.\\
\href{https://doi.org/\vldbdoi}{doi:\vldbdoi}
\endgroup
\begingroup
\renewcommand\thefootnote{}\footnote{\noindent
This work is licensed under the Creative Commons BY-NC-ND 4.0 International License. Visit \url{https://creativecommons.org/licenses/by-nc-nd/4.0/} to view a copy of this license. For any use beyond those covered by this license, obtain permission by emailing \href{mailto:info@vldb.org}{info@vldb.org}. Copyright is held by the owner/author(s). Publication rights licensed to the VLDB Endowment. \\
\raggedright Proceedings of the VLDB Endowment, Vol. \vldbvolume, No. \vldbissue\ %
ISSN 2150-8097. \\
\href{https://doi.org/\vldbdoi}{doi:\vldbdoi} \\
}\addtocounter{footnote}{-1}\endgroup

\vspace{.1cm}
\begingroup\small\noindent\raggedright\textbf{PVLDB Artifact Availability:}\\
The source code, data, and/or other artifacts have been made available at \url{\vldbavailabilityurl}.
\endgroup

\section{Introduction}
\label{sec:introduction}


\begin{figure}[t]
\centering
\begin{minipage}[t]{0.48\columnwidth}
\exq{Q1}{What is the commission file number?}
\exarrow
\begin{tcolorbox}[exsrc, equal height group=exrow1b]
\centering\textbf{Evaporate}\par\vspace{1.5pt}\hrule\vspace{2pt}\raggedright
\textbf{Source:} $\dots$ \exhl{Commission file number}~~\exhl{1-3285} $\dots$ State of Incorporation: Delaware $\dots$\\
\textbf{Code:} \texttt{re.search(r"Commission file number\textbackslash{}s+([\textbackslash{}d-]+)", doc)}\\
\textbf{Answer:} \texttt{1-3285}~\ding{51}
\end{tcolorbox}
\end{minipage}\hfill
\begin{minipage}[t]{0.48\columnwidth}
\exq{Q2}{Did management report any material weaknesses in internal control over financial reporting? (Yes/No)}
\exarrow
\begin{tcolorbox}[exsrc, equal height group=exrow1b]
\centering\textbf{Evaporate}\par\vspace{1.5pt}\hrule\vspace{2pt}\raggedright
\textbf{Source:} $\dots$ fatigue testing revealed a \exhl{material weakness} in the titanium alloy of our turbine blades, prompting a product recall $\dots$\\
\textbf{Code:} \texttt{re.search(r"material weakness(es)?", doc)}\\
\textbf{Answer:} \texttt{Yes}~\ding{55}
\end{tcolorbox}
\end{minipage}
\par\vspace{1mm}
\begin{tcolorbox}[exsrc, width=\columnwidth, colback=fig1green!25, colframe=fig1green!80!black]
\textbf{Ground truth source (Q2):} $\dots$ management concluded that internal control over financial reporting \exhl{was effective} as of December 31, 2022 $\dots$~~$\Rightarrow$~~\textbf{Answer:} \texttt{No}~\ding{51}
\end{tcolorbox}
\vspace{-4mm}
\caption{\small Extraction queries over a 10-K. Each panel shows the query, its
source span (answer-relevant parts in \exhl{blue italics}), and the program Evaporate generates
with the answer it returns. }
\label{fig:examples}
\vspace{-2mm}
\end{figure}

Data analytics in many domains centers on processing documents,
including SEC filings in finance~\cite{islam2023financebench}, federal court opinions in law~\cite{courtlistener}, and drug product labels in medicine~\cite{emaepar}. At the core of these analyses is extracting structured fields from the documents.  Table~\ref{tab:datasets} lists representative domains and the fields extracted from each. These tasks all hinge on extraction that is both accurate and scalable. 
Large language models (LLMs) have proven powerful for document extraction, but running them directly over large document collections is prohibitively costly. For example, extracting data specified in Table~\ref{tab:datasets} from 1M financial reports (e.g., from OfficeQA~\cite{opsahl2026officeqa}, each with  \textasciitilde{}110{,}000 tokens) with GPT-5.5 naively would cost  \textasciitilde{}\$275{,}000.
Pre-trained ML models, on the other hand, while inexpensive,  are narrow: each is tuned to one domain, does not generalize, and struggles with complex extraction. We need an approach that matches the accuracy of LLMs with the low cost to scale to large collections.


\begin{figure*}[t]
  \centering
  \includegraphics[width=\textwidth]{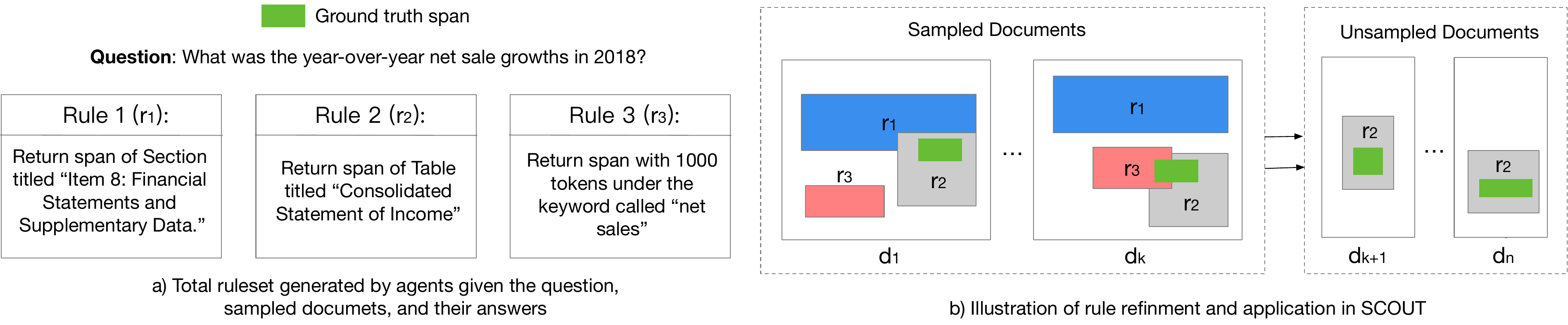}
  \vspace{-5mm}
  \caption{\small Motivating example to illustrate \sys's pipeline on financial datasets. }
  \vspace{-2mm}
  \label{fig:motivating_example}
\end{figure*}

Thankfully, real-world documents are rarely independent: {\em collections exhibit rich similarity that can be exploited for scale}, in two forms. First, documents may share similar {\em structure}, as in a template or layout, such as  articles from the same publisher or medical reports from the same hospital. Second, documents may share {\em domain semantics}, the same concepts and terminology, as research papers in the same field do. 
This similarity leads to one key insight: {\em for the same query over a set of similar documents, the answer tends to recur in similar locations}. Such similarity makes extraction scalable: once the answer's location has been learned, the LLM needs to only read that small span, rather than the complete document. 

\noindent{\bf Prior work leveraging data similarity falls short.} 
Prior work exploiting data similarity for scalable extraction falls into two categories. One assumes documents share a rigid structure, such as a tabular~\cite{lin2025visual} or hierarchical~\cite{zendb} template, or structured markup such as HTML tags of web pages~\cite{roadrunner,lixto,exalg,fivatech,webinvariants}, and uses it to locate and extract the target field. Such approaches fail when a collection follows no consistent template or markup. The other~\cite{evaporate} assumes each answer is a substring of the input, so that an LLM-generated program can extract it directly; e.g., in Figure~\ref{fig:examples}, a regex can recover the commission file number (Q1). Such approaches break down when the answer is not stated in the text and must instead be inferred from the context; for Q2, a program matching ``material weakness'' returns a false positive. Beyond the above, a large body of work studies extraction from each document individually, ignoring data similarity, and is thus either inaccurate at scale or not scalable; we defer detailed discussion to Section~\ref{sec:related}.

\noindent{\bf Frontier agent for data extraction program synthesis.} Deterministic programs are attractive for extraction because they are cheap and fast. One approach synthesizes a program to emit the answer directly, e.g.,~\cite{evaporate}, which is brittle. Instead, one can instruct a frontier agent (e.g., Codex~\cite{codex}) to  generate programs, or what we call {\em rules}, that return only the {\em subset of a document} where the answer lies, then let an LLM read that subset to produce the answer. Since the LLM, not the program, produces the final answer, this approach is  more flexible, at the cost of a little extra reasoning over the located span. For the net-sales query in Figure~\ref{fig:motivating_example}, an agent may generate rules $r_1$--$r_3$, each returning a candidate location: $r_1$ the section titled ``Item 8$\dots$'', $r_2$ the income-statement table, and $r_3$ the 1000 tokens around the keyword ``net sales''. The green {\em ground-truth span} is where the answer lies, and an oracle (e.g., a top-of-the-line LLM) can extract the answer from it. If a rule such as $r_2$ returns a small superset of this span, an oracle reading only that span, rather than the full document, yields correct answers at low cost, whenever such a location pattern recurs across the collection.

So, {\em can a frontier agent generate effective rules that locate answers accurately?} We run a GPT-5.4 agent on the six datasets in Table~\ref{tab:intro-motivation} ({\em baseline agent}), giving it the documents, queries, and tools that measure each program's {\em accuracy} (whether the located span lets the oracle recover the answer) and {\em cost} (the fraction of the document fed to the oracle), with targets of high accuracy and low cost (setup described in Section~\ref{sec:experiments}). The agent typically samples a few documents, obtains the oracle's answer on each, iteratively generates rules until the targets are met, applies them to the unsampled documents, and is scored against the ground truth. As Table~\ref{tab:intro-motivation} shows, its set of rules is still {\bf \em 16 points less accurate} than what we call a {\em golden baseline}, which runs the same agent over each full document; and though far cheaper than that baseline, its {\bf \em cost is over $3\times$ what is achievable}: \$$0.016$ per document, versus the \$$0.0047$ our approach achieves at the golden baseline's accuracy.

This gap has two systematic causes. First, selecting a cheap yet accurate subset from the large space of candidate rules is NP-hard (Section~\ref{sec:rule-refinement}), and the agent searches this space heuristically, with no guarantee of near-optimality. Second, to keep the generation cost low, the agent learns rules from a small sample, so they may {\em overfit} and fail to generalize to the rest of the collection; enlarging the sample raises the generation cost. 

\begin{table}[t]
\centering
\footnotesize
\setlength{\tabcolsep}{3pt}
\renewcommand{\arraystretch}{1.15}
\begin{tabular*}{\columnwidth}{@{\extracolsep{\fill}}lcccccc@{}}
\toprule
 & \multicolumn{2}{c}{\textbf{Golden baseline}} & \multicolumn{2}{c}{\textbf{Baseline agent}} & \multicolumn{2}{c}{\textbf{\sys}} \\
\cmidrule(lr){2-3}\cmidrule(lr){4-5}\cmidrule(lr){6-7}
\textbf{Dataset (\# of docs)} & \textbf{Accu} & \textbf{\$/doc} & \textbf{Accu} & \textbf{\$/doc} & \textbf{Accu} & \textbf{\$/doc} \\
\midrule
Court (294)        & 0.918 & 0.24 & 0.751 & 0.008 & 0.925 & 0.0004 \\
FinanceBench (100) & 0.986 & 0.22 & 0.764 & 0.020 & 0.943 & 0.0096 \\
NoPV (242)         & 0.922 & 0.28 & 0.781 & 0.011 & 0.929 & 0.0039 \\
OfficeQA (200)     & 0.830 & 0.74 & 0.666 & 0.034 & 0.831 & 0.0031 \\
Product (200)      & 0.902 & 0.25 & 0.802 & 0.013 & 0.886 & 0.0040 \\
Tropic (200)       & 0.813 & 0.21 & 0.671 & 0.011 & 0.873 & 0.0072 \\
\midrule
\textbf{Average} & 0.895 & 0.32 & 0.739 & 0.016 & 0.898 & 0.0047 \\
\bottomrule
\end{tabular*}
\caption{\small Accuracy and cost of agentic data extraction; \$/doc amortizes program-generation cost over the collection. }
\vspace{-10mm}
\label{tab:intro-motivation}
\end{table}

\begin{table*}[t]
  \centering
  \footnotesize
  \begin{tabular}{@{}l p{0.34\textwidth} p{0.44\textwidth}@{}}
    \toprule
    \textbf{Dataset} & \textbf{Description} & \textbf{Extracted data} \\
    \midrule
    FinanceBench~\cite{islam2023financebench} & SEC financial filings (10-K and 10-Q) from major publicly traded companies. & Registrant name, state of incorporation, reporting period, total revenue and net income, total assets, stock exchange and trading symbol. \\
    \midrule
    Court~\cite{courtlistener} & U.S. federal court appeal opinions. & Docket numbers, presiding and panel judges, argument and filing dates, final disposition, majority opinion author. \\
    \midrule
    NoPV~\cite{phmsanopv} & PHMSA Notices of Probable Violation for pipeline safety. & Operator name, CPF case number, PHMSA region, cited CFR sections, corrective-action deadlines, inspection dates. \\
    \midrule
    OfficeQA~\cite{opsahl2026officeqa} & U.S. Treasury Bulletins of periodic financial reports. & Quarter and year, GDP growth, unemployment rate, federal deficit, gross federal debt, total federal receipts, debt held by the public. \\
    \midrule
    Publications~\cite{dasigi2021dataset} & Academic and scientific research papers. & Title, authors, affiliations, venue and year, DOI, datasets referenced. \\
    \midrule
    Medical Records~\cite{mimiciv} & Clinical patient records such as admission notes and discharge summaries. & Patient MRN, age and sex, admission and discharge dates, diagnoses (ICD codes), medications, attending physician. \\
    \midrule
    NHC Tropical Cyclone Reports~\cite{nhctcr} & Post-storm reports from the U.S. National Hurricane Center. & Cyclone name, basin-year identifier, report date, lead author, minimum central pressure, peak winds, direct deaths, total damage. \\
    \midrule
    EMA EPAR Product Information~\cite{emaepar} & EU drug product labels (Summary of Product Characteristics) from the European Medicines Agency. & Product name, active substance, pharmaceutical form, first therapeutic indication, ATC code, half-life, marketing authorisation holder. \\
    \bottomrule
  \end{tabular}
  \caption{\small Datasets from distinct domains, with representative values extracted per document. }
  \vspace{-5mm}
  \label{tab:datasets}
\end{table*}



\noindent{\bf \sys: robust and effective rule generation for scalable extraction.}  
To use agents for program synthesis while overcoming their lack of guarantees, we propose \sys\footnote{\sys stands for \emph{\textbf{Sc}alable d\textbf{o}c\textbf{u}men\textbf{t} extraction via data similarity}}, which generates {\bf \em accurate and cost-effective extraction rules with provable guarantees}, and is {\bf \em broadly applicable}, without requiring rigid document structure or answers being substrings of the input.
Developing \sys involves two challenges. The first is to learn, from a few sampled documents, a rule set that {\em generalizes} to the rest of the collection with high accuracy and low cost. The second is that a collection is rarely uniform: answers may recur in similar locations only within clusters of documents, so rules learned from one cluster may miss the others. \sys addresses both through a series of techniques, organized as the pipeline in Figure~\ref{fig:pipeline}.

Given the extraction query $Q$, a set of sampled documents (our sampling approach  is described shortly), and their answers produced by an oracle LLM, \sys first prompts an agent to generate a set of candidate rules ({\em rule generation} in Section~\ref{sec:rule-generation}). This step prioritizes correctness over cost. 
We say a set of rules is {\em correct} on a document if the oracle can reproduce the answer from the union of the spans returned by its rules.   In this step, \sys instructs the agent to generate as many rules as possible, each returning a subset of each document by capturing a recurring answer pattern, so that their union is correct on as many documents as possible. This is both critical and achievable: we empirically show later (Table~\ref{tab:rulegen-results}) on six real-world datasets that the generated rules, when applied, match the accuracy of the most accurate but expensive baseline that runs the oracle over each complete document. 

Although the complete set of rules \sys generates is correct on most documents, it may contain both inefficient rules, which return too much text (e.g., $r_1$ in Figure~\ref{fig:motivating_example}), and inaccurate rules, which often miss the answer (e.g., $r_3$). Applying all of them is therefore unnecessarily expensive. \sys performs {\em rule refinement} (Section~\ref{sec:rule-refinement}) to select the lowest-cost subset that remains correct on the sampled documents. 
Here, we assume the oracle is {\em monotone}: if a span is sufficient for it to reproduce the answer, so is any superset of that span. Building on this assumption, our key insight is that once a rule subset is correct on a document, adding more rules cannot make it incorrect, since the union of returned spans still contains the answer. We show that rule refinement is NP-hard via a reduction from Set Cover, so \sys uses a greedy, provably near-optimal algorithm with an approximation factor of $O(\ln n)$, where $n$ is the number of sampled documents. At each step it adds the rule correct on the most not-yet-covered documents per unit of additional cost, favoring rules correct on many documents, which curbs overfitting. 

A subset selected on the sampled documents is still not enough, since a rule correct on the sample may fail on an unseen document. \sys therefore constructs a {\em cascade} of nested rule subsets (detailed in {\em rule application} in Section~\ref{sec:rule-application}): the first refined rule, the first two rules, and so on, ordered by their accuracy-to-cost ratio. For each subset in order, a cheap {\em proxy} LLM checks whether the returned spans contain enough information to answer $Q$. If so, the oracle answers $Q$ from those spans; otherwise, \sys proceeds to the next subset. If no refined subset passes the proxy, \sys falls back to the complete set of rules, which is correct on nearly all documents. 

%

Finally, when a collection is not uniformly similar, \sys uses a {\em sampling} strategy (Section~\ref{sec:sampling}). Such a collection splits into {\em clusters}, i.e., groups of documents whose answers share the same location pattern (e.g., in the same section) and are covered by one rule; the sample must include one document per cluster. This is difficult as identifying the clusters requires both the rules and the oracle, to verify which rules are correct on which documents, and neither is known a priori. \sys instead approximates by identifying the clusters with no LLM calls, computing the distribution of embedding similarities between the query and a document's chunks, which is used to further approximate the similarity between documents. It then draws a sample by repeatedly choosing the document least similar to those already selected, to cover distinct clusters.

\begin{figure}[t]
  \centering
  \includegraphics[width=\columnwidth]{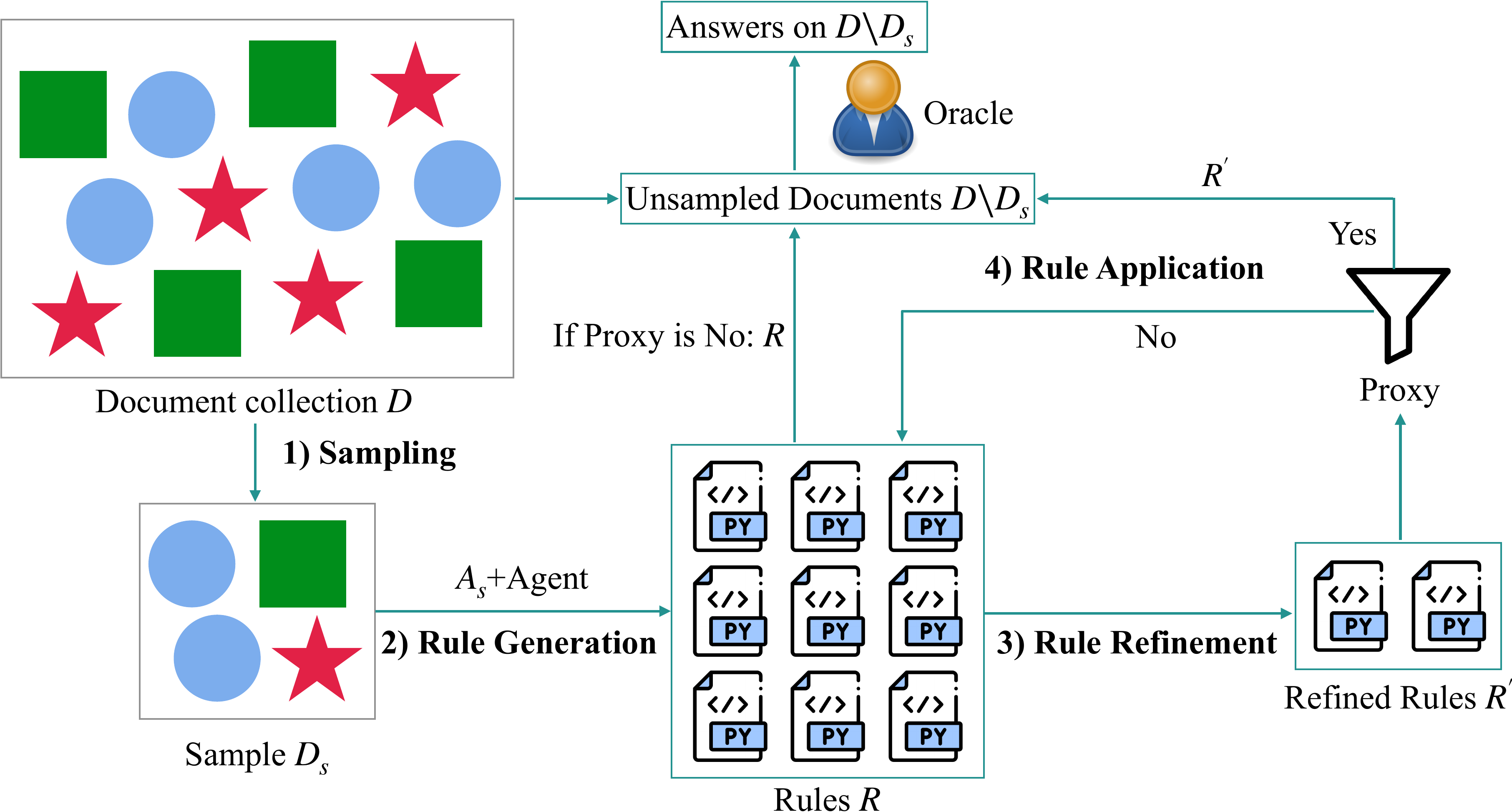}
  \vspace{-8mm}
  \caption{\small Pipeline of \sys. }
  \vspace{-8mm}
  \label{fig:pipeline} 
\end{figure}

We summarize our contributions as follows.

\begin{itemize}
    \item We formalize document similarity as {\em rules}, Python programs  synthesized by LLMs that retrieve portions of documents to be fed into an oracle LLM. Such rules serve as primitives for scalable document extraction. (Section~\ref{sec:problem})
    \item We formalize the selection of an accurate, cost-efficient rule set as a {\em pareto-optimal} optimization problem, prove it to be NP-hard, and provide a greedy algorithm with a provable approximation guarantee. (Section~\ref{sec:rule-refinement})
    \item We develop techniques that improve robustness for collections of varying similarity, including a sampling strategy and cascade rule application. (Section~\ref{sec:robustness})
    \item We show on six datasets that \sys matches the accuracy of a frontier agent that processes each complete document while being $61\times$ to over $1000\times$ cheaper at $1{,}000$ documents, and is $61\%$ more accurate than the strongest prior program-based approach. (Section~\ref{sec:experiments})
\end{itemize}




\section{Rule Definition and Overview of \sys} 
\label{sec:problem}

In this section, we formalize the concept of rules and then give an overview of \sys.

\subsection{Preliminaries}
\label{sec:problem-setup}

Consider a data extraction query $Q$ in natural language and a document collection $\mathcal{D} = \{d_1, d_2, \dots, d_n\}$. Let $O$ be an oracle that takes a document $d_j$ as input and produces the answer $A_j$ to $Q$, denoted $O(d_j,Q) = A_j$.
The oracle $O$ is a ground-truth labeler (e.g., a human) or a top-of-the-line LLM (e.g., GPT-5.5). 
Given $\mathcal{D}$ and $Q$, our goal is to produce $A_j$ for every $d_j \in \mathcal{D}$, matching the accuracy of the oracle but at minimal cost. For now, we focus on a single query $Q$ over the collection $\mathcal{D}$; we discuss multiple queries in Section~\ref{sec:experiments}. Here, both $|\mathcal{D}|$ and the size of an individual document can be large, making answering $Q$ over the entire collection prohibitively costly.  


We treat each document $d_j$ as a sequence of words in their reading order, $d_j = [w_{j1}, w_{j2}, \dots, w_{jm}]$, obtained by serializing its source file (e.g., a PDF or Word document) with off-the-shelf OCR tools, e.g., docling~\cite{docling}. An image is treated as a special word that points to its location in the document, i.e., its page and bounding box, so an LLM can analyze it, say, to extract data from it.


\subsection{The Notion of Rules}
\label{sec:problem-rules}

To support accurate extraction at low cost, we observe that answers to the same query often recur in similar patterns across similar subsets of a  document collection. 
We model answer locations using {\em rules}, on which we build the notion of data similarity.

Let $r_i$ be a rule, implemented as a Python program synthesized by an agent. Given a document $d_j$, it returns a subsequence $d_j^{i} \subseteq d_j$. For example, for the query ``What was the year-over-year net sales growth in 2018?'' in Figure~\ref{fig:motivating_example}(a), rule $r_2$ returns the ``Consolidated Statement of Income'' table of a 10-K document $d_j$ as its subsequence $d_j^{2}$. 
However, not all rules 
are {\em effective}. Some return a span that misses the ground-truth answer, such as rule $r_3$ on document $d_1$ in Figure~\ref{fig:motivating_example}(a). Others return a span that contains the answer but is unnecessarily large, so the oracle still incurs a high cost, such as rule $r_1$, which returns the entire ``Item~8'' section. 
We define a few properties below to quantify how  effective a rule is.

\noindent{\bf Accuracy}. Given the query $Q$, a rule $r_i$ is {\em correct} on document $d_j$ if running the oracle $O$ on the returned subsequence $d_{j}^{i}$ produces the answer $O(d_j,Q)$, i.e., $O(d_{j}^{i},Q) = O(d_j,Q)$. We denote the accuracy of $r_i$ on $d_j$ as $a(r_i,d_j) \in \{0,1\}$. 
Now, we define the accuracy of a set of rules, also called a {\em rule set}, $R = \{r_1,\dots,r_m\}$ on document $d_j$ as $a(R,d_j)$. $R$ is correct on $d_j$ if running the oracle on the union of its returned subsequences $\bigcup_{r_i\in R}d_{j}^{i}$ produces the same answer, i.e., $O(\bigcup_{r_i\in R}d_{j}^{i},Q) = O(d_j,Q)$. Finally, the accuracy of $R$ over a document set $D = \{d_1,\dots,d_l\}$ is $a(R,D) = \frac{\sum_{d_j\in D}a(R,d_j)}{|D|}$, the fraction of documents in $D$ on which $R$ is correct. 

A good rule should also be {\em general}: it should be correct on many documents rather than a few, so that it captures a recurring answer pattern rather than a coincidence in a handful of documents. We measure this generality by the {\em coverage} of a rule $r_i$ over a document set $D$, which is basically its accuracy taken individually over $D$: $a(r_i,D) = \frac{\sum_{d_j\in D}a(r_i,d_j)}{|D|}$, the fraction of documents in $D$ on which $r_i$ is correct.

\noindent{\bf Cost}. Beyond being correct and general, a rule should also be {\em cost-effective}: the subsequence it returns should be small enough to keep the cost of the oracle on that subsequence low. We define the cost of a rule $r_i$ on document $d_j$ as the number of tokens in the subsequence it returns, denoted $\mathit{c}(r_i,d_j) = |d_j^{i}|$. We further define the {\em cost ratio} of $r_i$ on $d_j$ as $\mathit{cr}(r_i,d_j) = \frac{|d_j^{i}|}{|d_j|}$, the fraction of $d_j$'s tokens that $r_i$ retrieves. Similarly, $cr(R,d_j) = \frac{|\bigcup_{r_i\in R}d_{j}^{i}|}{|d_j|}$. Finally, the average cost ratio of a set of rules $R$ over a document collection $D$ is $cr(R,D) = \frac{\sum_{d_j\in D}cr(R,d_j)}{|D|}$. 

Finally, we state a {\em monotonicity} property that formalizes an assumption about oracle behavior. Given $Q$, if a subsequence $d_j^i \subseteq d_j$ produces the oracle answer, i.e., $O(d_j^i, Q) = O(d_j,Q)$, then any superset $S$ of it with $d_j^i \subseteq S \subseteq d_j$ also produces $O(d_j,Q)$, since adding extra text to $d_j^i$ never misses the span that produces the answer. This property carries over to rules: if a rule set $R'$ is correct on $d_j$, then any superset $R \supseteq R'$ is also correct on $d_j$, because applying $R$ returns the union $\bigcup_{r_i\in R} d_j^i$ to the oracle, which only grows as rules are added. Under this assumption, a set of rules $R$ over a document set $D$ has accuracy at least that of any of its subsets over $D$, i.e., $a(R,D) \ge a(R',D)$ for any $R' \subseteq R$. We formally state this as the {\em accuracy monotonicity} property below.

\begin{property}[Accuracy monotonicity]
\label{prop:acc-monotone}
For any rule sets $R' \subseteq R$ and any document $d_j$, $a(R',d_j) = 1 \Rightarrow a(R,d_j) = 1$. Consequently, $a(R',D) \le a(R,D)$ for any document set $D$. 
\end{property}

Even a top-of-the-line LLM (e.g., GPT-5.5) is not a true oracle, so monotonicity may not always hold. Such failures are rare in practice: a recent study~\cite{blip} (published at VLDB 2026) reports that monotonicity holds on over $94\%$ of documents across real-world datasets, and we therefore adopt it as an assumption. When monotonicity fails on a document, only that document is affected, so the guarantees that follow degrade by at most the fraction of documents on which monotonicity fails.
Having formalized rules, we now give an overview of \sys.




\subsection{Overview of \sys}
\label{sec:problem-overview} 
As shown in Figure~\ref{fig:pipeline}, given a data extraction query $Q$ and a document collection $\mathcal{D}$, \sys is a tool that consists of four steps to generate effective programs to extract data from $\mathcal{D}$. Given $Q$, \sys first samples a subset $D_s \subseteq \mathcal{D}$ (Section~\ref{sec:sampling}, described later), then uses an agent to generate a set of rules from $D_s$ and $Q$, prioritizing accuracy (Section~\ref{sec:rule-generation}).
\sys then refines these rules by selecting a subset that preserves their accuracy while reducing cost (Section~\ref{sec:rule-refinement}). Selecting such a subset from the large space of candidate rules is NP-hard, so \sys employs a greedy approach with a provable guarantee.
Because the rules are selected on the sample, a rule correct on $D_s$ may still be incorrect on an unseen document. To reduce this overfitting and to handle collections where not all documents are similar, \sys uses two strategies (Section~\ref{sec:robustness}): (1)~a sampling strategy that draws a small but representative document subset (Section~\ref{sec:sampling}), and (2)~a rule-application strategy that applies the refined rules using a cascade with a proxy model and a fallback (Section~\ref{sec:rule-application}).
\section{Rule Generation}
\label{sec:rule-generation}

We now introduce how to generate an initial rule set by using an agent. 
For now, we assume we are given a query $Q$ and a sampled subset $D_s \subseteq \mathcal{D}$, and defer how to construct $D_s$ to Section~\ref{sec:sampling}. 
Rule generation prioritizes accuracy over cost: it aims for a broad rule set with high accuracy over $\mathcal{D}$, even if the cost of applying all its rules is high. This allows the larger set of generated rules to be used as a fallback, when a small set of refined rules does not contain the answer (as we will see later).

\subsection{Rule Generation}
\label{sec:rulegen-method}


\noindent{\bf Input and tools.} Rule generation takes as input a query $Q$, sampled documents $D_s$, and their oracle answers $\{A_j : d_j \in D_s\}$, obtained by running the oracle $O$ on each $d_j \in D_s$. Although a document is a sequence of words, we do not expose this word sequence to the agent. Instead, let $\widehat{d}_j$ be $d_j$ enriched by open-source OCR tools (e.g., Docling). $\widehat{d}_j$ is a list of consecutive {\em spans} in reading order, $\widehat{d}_j = \langle o_{j1}, o_{j2}, \dots, o_{jk}\rangle$, where each span $o_{ji}$ is a contiguous sequence of words annotated with a type label (e.g., \texttt{section\_header}, \texttt{text}, \texttt{table}, \texttt{list\_item}), a page number, typographic attributes (e.g., a bold flag and font size), and structural metadata (e.g., for tables, the row and column cell grid, and the caption/title). The spans partition the words of $d_j$, so $\widehat{d}_j$ is a labeled segmentation over the words of the document, and a rule reads both this span structure and the span text (e.g., via regex or keyword matching) to decide which sequence of words to return, down to spans or sub-spans. The output is a rule set $R = \{r_1, \dots, r_m\}$, each a Python function that maps a document to a subsequence of $d_j$ (Section~\ref{sec:problem-rules}); Figure~\ref{fig:rule-example} shows one rule for the net-sales query, which returns the \texttt{table} spans in the ``Consolidated Statement of Income'' whose text or cells mention ``net sales'', regardless of the page it appears on.

Beyond documents, the agent is equipped (powered by GPT-5.4, but any frontier LLM would work) with a set of {\em tools}: its accuracy and cost-ratio metrics (Section~\ref{sec:problem-rules}), exposed as tools so the agent can measure how good a rule (or a ruleset) is, as well as the default agentic tools (in this case from OpenAI) for inspecting documents, such as loading and filtering spans by page, or keyword, searching tables, or retrieving spans by embedding similarity. Embedding-based retrieval is one pattern a rule can express, making retrieval-augmented extraction possible via rules; rules may additionally capture structural and positional patterns, such as a section, a table, or a page, that semantic similarity alone may miss.

\begin{figure}[t]
\centering
\begin{tcolorbox}[colback=white,colframe=black!60,boxrule=0.5pt,arc=0pt,left=5pt,right=5pt,top=4pt,bottom=4pt]
\begin{lstlisting}[style=rulecode]
def rule_net_sales(doc):
    # "Consolidated Statement of Income" tables
    # that report net sales.
    out = []
    for span in doc["texts"]:
        if span["label"] != "table":      # tables only
            continue
        path = span["structure"]["path_text"].lower()
        text = span["text"].lower()
        # keep only the Consolidated Statement of Income
        if "consolidated statement of income" not in path:
            continue
        # match in the rendered text or in the cells
        if "net sales" in text \
           or "net sales" in join_cells(span):  # ...
            out.append(span)
    return out                  # (returns [] on any error)
\end{lstlisting}
\end{tcolorbox}
\vspace{-6mm}
\caption{\small Python script of $r_2$ in Figure~\ref{fig:motivating_example}. }
\label{fig:rule-example}
\shortonly{\vspace{-6mm}}
\longonly{\vspace{-6mm}}
\end{figure}

\noindent{\bf Prompt specification.} The agent is instructed as a {\em document rule engineer} to emit Python-based functions with format \texttt{def} \\ \texttt{rule\_name(doc)} to return subsets of documents that may contain the answer (the full prompt is in \shortonly{~\cite{scoutextended}}\longonly{Figure~\ref{fig:rulegen-prompt}}). The agent is instructed to {\em maximize the accuracy} of the rule set while keeping each individual rule cheap. To this end, it is instructed to  generate as many rules as possible, each capturing a distinct data pattern (such as the hint categories below); this diversity reduces overfitting and keeps accuracy $a(R,\mathcal{D})$ high across $\mathcal{D}$. 
Concretely, the prompt states two targets, with accuracy taking priority: (i)~the accuracy of the rules on the sample $D_s$ is at least $0.95$, and (ii)~each rule's cost ratio is below $0.1$. 
Given $Q$, $D_s$, and their oracle answers, $\{A_j: d_j \in D_s\}$, the prompt lists seven hint categories for locating oracle answers in the sample $D_s$, namely physical location (page), semantic location (section header or path), keyword proximity, data features (table cells), typography (bold, large, or all-caps), structural position (heading level and depth), and an open-ended ``any other'' category. The prompt also requires each rule to be self-contained and to return an empty list when nothing matches. When unsure, an agent is instructed to have a rule to return extra spans rather than risk dropping the answer. 

\longonly{
\begin{figure}[t]
\centering
\begin{tcolorbox}[colback=white,colframe=black!60,boxrule=0.5pt,arc=0pt,left=5pt,right=5pt,top=4pt,bottom=4pt]
\footnotesize\ttfamily
\textbf{\sys-RuleGen-Prompt:} You are a document rule engineer working in the project repository. You are given a question \ph{Question} and, for each sampled document, its enriched span representation \ph{Documents} together with the ground-truth answer \ph{Answers}. Write Python span-retrieval rules \texttt{def rule\_name(doc)} that return the spans that contain the answer from the given collection.

\textbf{Objective:} maximize the accuracy of the rule pool, and keep the cost ratio of each individual rule low. Generate as many rules as possible to capture distinct data patterns.

\textbf{Targets (accuracy first):} (1)~accuracy of the total rules at least $0.95$ on the sampled documents; (2)~each rule's cost ratio below $0.1$; accuracy always takes priority over cost.

\textbf{Tools:} \ph{Accuracy specification} and \ph{Cost ratio specification} are callable tools. You may also use all default agent tools, \ph{default tool specifications},  e.g., load a document, filter spans by label, page, or keyword, search tables, retrieve spans by embedding similarity, and evaluate arithmetic.

\textbf{Workflow (iterative):} study where the answer sits in each sampled document; add one or more new rules to the current pool and measure their accuracy and cost with the tools; never discard an existing rule. Repeat, extending the pool, until the soft targets are met.

\textbf{Hints:} locate the answer using (1)~physical location (page), (2)~semantic location (section header or path), (3)~keyword proximity, (4)~data features (table cells), (5)~typography (bold, large, or all-caps), (6)~structural position (heading level and depth), and (7)~any other recurring pattern.

\textbf{Requirements:} each rule must be self-contained, must never raise, and must return an empty list when nothing matches. Prefer returning a few extra spans over missing the answer.
\end{tcolorbox}
\longonly{\vspace{-5mm}}
\caption{\small A concise version of the rule-generation prompt.}
\label{fig:rulegen-prompt}
\longonly{\vspace{-7mm}}
\end{figure}
}

\noindent{\bf Iterative agentic rule generation.} \sys generates the rule set with an {\em agent} that tests and iterates on candidate rules. Starting from an empty set, the agent initially generates as many rules as possible; in later iterations, it is instructed to generate rules for the sampled documents on which the current rule set is not yet correct, checking their accuracy and cost with the tools above. Agent continues 
 until the targets are met.

\subsection{Empirical Assessment: Rule Quality}
\label{sec:rulegen-eval}
So far, we have presented how an agent can produce a rule set $R$ based on sample $D_s$; now we empirically evaluate how accurate $R$ is over the entire collection $\mathcal{D}$. 
We evaluate agent-generated rules on six datasets from distinct domains, each using a sample (generated by our  strategy to be described in Section~\ref{sec:sampling}) and a Codex agent powered by GPT-5.4 as the oracle. We defer the setup details to Section~\ref{sec:experiments}. For each query, $R$ has on average $19$ to $41$ rules across the datasets (column \emph{Avg \# rules}), each encoding one recurring pattern of where the answer resides. 
Table~\ref{tab:rulegen-results} reports the accuracy $a(R,\mathcal{D})$ (column {\em \sys}) and cost ratio $\mathit{cr}(R,\mathcal{D})$ (column {\em Cost ratio}) of $R$ over the collection $\mathcal{D}$. We also present a {\em golden baseline} that runs the oracle $O$ over each entire document $d_j$, the most accurate but most expensive strategy. Table~\ref{tab:rulegen-results} reveals two findings.

\begin{table}[t]
\centering
\scriptsize
\setlength{\tabcolsep}{2pt}
\renewcommand{\arraystretch}{1.25}
\begin{tabular*}{\columnwidth}{@{\extracolsep{\fill}}lccccc@{}}
\toprule
\textbf{\shortstack[l]{Dataset\\(\# Sampled\,/\,\# Total)}} & \textbf{\sys} & \textbf{\shortstack{Golden\\baseline}} & \textbf{\shortstack{Cost\\ratio}} & \textbf{\shortstack{Cost ratio\\(refined rules)}} & \textbf{\shortstack{Avg \#\\rules}} \\
\midrule
\textbf{FinanceBench} (20/100) & 0.943 & 0.986 & 0.088 & \textbf{0.0056} & $\sim$35 \\
\textbf{Court} (20/294)        & 0.925 & 0.918 & 0.074 & \textbf{0.0047} & $\sim$40 \\
\textbf{NoPV} (20/242)         & 0.929 & 0.922 & 0.392 & \textbf{0.0373} & $\sim$41 \\
\textbf{OfficeQA} (20/200)     & 0.831 & 0.830 & 0.208 & \textbf{0.0035} & $\sim$33 \\
\textbf{Product} (20/200)      & 0.886 & 0.902 & 0.096 & \textbf{0.0153} & $\sim$19 \\
\textbf{Tropic} (20/200)       & 0.873 & 0.813 & 0.211 & \textbf{0.1015} & $\sim$28 \\
\bottomrule
\end{tabular*}
\caption{\small Rule-generation quality.}
\label{tab:rulegen-results}
\shortonly{\vspace{-10mm}}
\longonly{\vspace{-10mm}}
\end{table}

{\bf\em First, the agent-generated rules have high recall.} 
We measure the accuracy $a(R, \mathcal{D})$ of the returned rule set $R$. 
$a(R, \mathcal{D})$ for \sys's rules, $0.898$, matches the golden baseline's $0.895$. Hence the rules, taken together, lose almost no accuracy relative to running the oracle $O$ on the entire document.

{\bf\em Second, there is substantial room to reduce cost.} Even without refinement, the rule set $R$ is efficient: its union reproduces the golden-baseline accuracy while reading only a fraction of each document, a cost ratio $cr$ of $0.07$ to $0.39$ versus $\mathit{cr}=1$ for the full document. As we show in Section~\ref{sec:exp-results}, a refined, selective subset of $R$ is far cheaper still: its cost ratio (column \emph{Cost ratio (refined rules)}) reaches only $0.004$ to $0.102$, a $2\times$ to $60\times$ reduction over $R$'s at comparable accuracy. Achieving this reduction is the goal of rule refinement, described next.

\section{Rule Refinement}
\label{sec:rule-refinement}

As we saw, the agent generates a set of rules $R$ with high recall, yet applying all of $R$ returns far too much of each document, leaving substantial room to reduce cost. We now present {\em rule refinement}, which selects a small subset of $R$ that preserves accuracy while reducing cost.

\subsection{Problem Definition}
\label{sec:refine-problem} 
Given the set of rules $R$ generated by the agent over the sample $D_s$, for each rule $r_i \in R$, and more generally any subset $R' \subseteq R$, we can evaluate its  accuracy $a(R', D_s)$ and cost ratio $\mathit{cr}(R',D_s)$ over $D_s$, as defined in Section~\ref{sec:problem-rules}. The cost ratio is free to compute, as it only requires counting the fraction of tokens retrieved by a rule; accuracy, in contrast, requires invoking the LLM and is therefore more costly to estimate. 

Intuitively, the subset of rules we retain should be {\em cost-efficient}, selecting as little of each document as possible, and {\em accurate}, reproducing the oracle's answer on every document. 
These goals conflict. 
We therefore seek subsets that best trade off cost against accuracy on the sample (the {\em pareto-optimal} ones). Note that generalization to unsampled documents is not part of this objective, but is handled separately by our algorithm, which favors general, high-coverage rules (Section~\ref{sec:refine-algorithm}). Now we define a {\em pareto-optimal set}: 

\begin{definition}[Pareto-optimal subset]
\label{def:pareto}
Given subsets $R', R'' \subseteq R$, $R'$ {\em dominates} $R''$ if $\mathit{cr}(R', D_s) \le \mathit{cr}(R'', D_s)$ and $a(R', D_s) \ge a(R'', D_s)$, with at least one inequality strict. $R'$ is {\em pareto-optimal} if no subset dominates it. $\mathcal{P}$ is the set of all pareto-optimal subsets.
\end{definition}

For example, in Figure~\ref{fig:pareto} the circles are pareto-optimal and the crosses are not, with each cross being dominated by a cheaper circle that is  at least as accurate. The two labeled rule sets $R'_1$ and $R'_2$ are both pareto-optimal, yet neither dominates the other: $R'_1$ is more accurate while $R'_2$ is cheaper. 

\begin{problem}[Rule Refinement]
\label{prob:refine}
Given the set of rules $R$, the sample $D_s$, and an accuracy tolerance $\alpha \ge 0$, find the pareto-optimal subset of least cost whose accuracy stays within $\alpha$ of $R$'s accuracy on $D_s$:
\[
R^* = \operatorname*{arg\,min}_{R' \in \mathcal{P}}\ \mathit{cr}(R', D_s) \quad \text{s.t.} \quad a(R', D_s) \ge a(R, D_s) - \alpha .
\]
\end{problem}

The tolerance $\alpha$ controls how much accuracy we trade for cost. We center it at $R$'s accuracy because, by accuracy monotonicity (Property~\ref{prop:acc-monotone}), no subset of $R$ can exceed $a(R, D_s)$. 
Taking $\alpha = 0$ then yields the cheapest subset of $R$ that matches $R$'s accuracy; a larger $\alpha$ allows for a bounded drop in accuracy for further savings. Such an $R^*$ is pareto-optimal by definition. For example, if the tolerance admits both labeled subsets in Figure~\ref{fig:pareto} (both sit above the $a(R, D_s) - \alpha$ line), an answer to Problem~\ref{prob:refine} would return $R'_2$.  


\begin{figure}
\centering
\includegraphics[width=0.8\columnwidth]{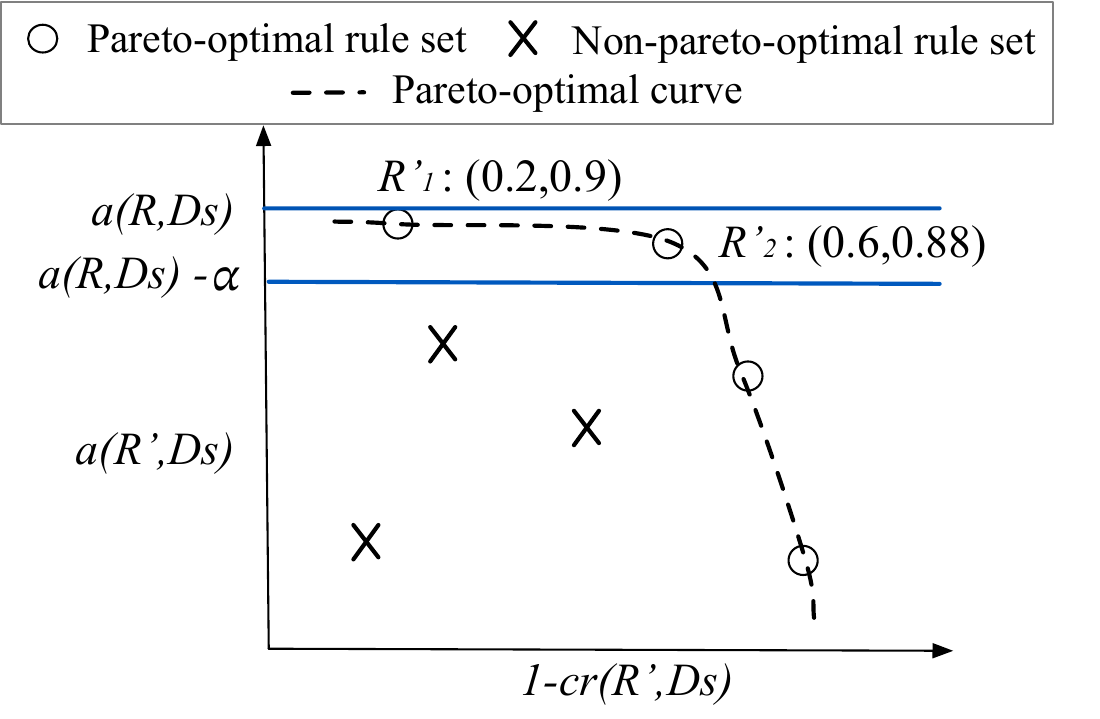}
\vspace{-5mm}
\caption{\small Rule refinement as a Pareto problem. }
\label{fig:pareto}
\vspace{-3mm}
\end{figure}

\noindent{\bf Complexity.} Problem~\ref{prob:refine} is intractable even in its simplest form. Consider $\alpha = 0$, where we ask for the cheapest subset that matches $R$'s accuracy exactly. We show this is already \textsf{NP}-hard by a simple reduction from \textsc{Set Cover}: each rule corresponds to a set and each document an element, matching $R$'s accuracy requires the chosen rules to cover every  document, and minimizing the cost ratio corresponds to minimizing the number of chosen rules. 
We state the result next \shortonly{ and defer the proof to the technical report~\cite{scoutextended}}.

\begin{theorem}
\label{thm:refine-hardness}
Rule Refinement (Problem~\ref{prob:refine}) is \textsf{NP}-hard, even for $\alpha = 0$.
\end{theorem}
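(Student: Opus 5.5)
We reduce from the decision version of \textsc{Set Cover}: given a universe $U=\{u_1,\dots,u_n\}$, a family $\mathcal{S}=\{S_1,\dots,S_m\}$ of subsets of $U$ with $\bigcup_i S_i = U$, and an integer $k$, decide whether some $k$ sets cover $U$. From such an instance we build, in polynomial time, a sample $D_s$, a rule set $R$, and an oracle $O$ satisfying accuracy monotonicity (Property~\ref{prop:acc-monotone}). The goal is that the optimal cost ratio in Problem~\ref{prob:refine} with $\alpha=0$ is an increasing function of the size of a minimum cover. Since the oracle in the model is abstract, we may specify it explicitly as part of the instance.

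\textbf{Construction.} For each element $u_j$ create a document $d_j$ consisting of $m$ designated blocks $B_{j1},\dots,B_{jm}$, each a single token. The block $B_{ji}$ carries a marker token ``ans'' if $u_j\in S_i$, and a filler token otherwise. Rule $r_i$ returns exactly block $B_{ji}$ on every $d_j$, so $\mathit{c}(r_i,d_j)=1$ and $cr(r_i,d_j)=1/m$. The oracle returns the true answer on a subsequence if and only if it contains at least one ``ans'' token, and a null answer otherwise. Because containing some ``ans'' token is preserved under supersets, monotonicity holds. Then $a(r_i,d_j)=1$ iff $u_j\in S_i$, and a subset $R'$ is correct on $d_j$ iff some chosen set contains $u_j$. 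Since $\mathcal{S}$ covers $U$, we have $a(R,D_s)=1$. With $\alpha=0$, the feasible subsets are therefore exactly the set covers.

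\textbf{Cost correspondence.} The blocks returned by distinct rules are disjoint, so $cr(R',d_j)=|R'|/m$ for every $j$, and hence $cr(R',D_s)=|R'|/m$. Minimizing $cr$ over feasible subsets thus minimizes $|R'|$ over covers. We must also check that the minimum-cost feasible subset lies in $\mathcal{P}$. Any feasible $R'$ has accuracy $1$, the maximum possible. So $R'$ can be dominated only by a strictly cheaper subset of accuracy $1$, and the minimum-cost feasible subset admits none. The constraint ``$R'\in\mathcal{P}$'' therefore does not alter the optimum. It follows that $U$ has a cover of size at most $k$ iff the optimum of Problem~\ref{prob:refine} is at most $k/m$, which gives \textsf{NP}-hardness.

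\textbf{Expected obstacle.} The delicate point is that the oracle must be a legitimate part of the model and consistent with monotonicity. We handle this by defining $O$ explicitly on subsequences and checking the superset condition directly. A second point is that evaluating $O$ must be polynomial for the reduction; the ``contains ans'' test clearly is. If one insists the oracle be arbitrary but fixed, the same argument goes through, since the hardness instance needs only one monotone oracle.
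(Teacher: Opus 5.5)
Your reduction is correct and has the same skeleton as the paper's. You reduce from \textsc{Set Cover} with one rule per set and one document per element. You fix an explicit monotone oracle that succeeds iff the received text contains an answer token. At the end you observe that, at $\alpha=0$, a minimum-cost feasible subset has maximum accuracy and so cannot be dominated, which means restricting to $\mathcal{P}$ does not change the optimum.

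Where you genuinely differ is the cost gadget. The paper gives each element document a single designated answer span, so every rule covering $u_j$ returns the same token on $d_j$. The union therefore does not grow with the number of such rules, and element documents alone cannot tell small covers from large ones. The paper repairs this by adding one ballast document $G_k$ per rule, of $T=2mn+1$ tokens. On $G_k$ only $r_k$ returns its first $T-1$ tokens, missing the answer, so each chosen rule pays a private cost of nearly $1$. It then absorbs the element-document contribution into a slack $|\delta|<\tfrac12$ and uses the threshold $B=\frac{k+1/2}{n+m}$.

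You instead replicate the answer marker in a private block for every set containing the element. The blocks returned by distinct rules are then disjoint, and $\mathit{cr}(R',D_s)=|R'|/m$ holds exactly. This removes the ballast documents, the padding, and the rounding argument, and as a side effect every sampled document is answerable, so $a(R,D_s)=1$.

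The only trade-off is that your oracle accepts any of several copies of the answer in a document rather than a single ground-truth location. Since $O$ is abstract in the problem and you check monotonicity directly, this is legitimate, and your version is the simpler of the two.
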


\longonly{
\begin{proof}
Fix $\alpha = 0$ and consider the decision version of Problem~\ref{prob:refine}: given a budget $B$, decide whether some subset $R' \subseteq R$ has $\mathit{cr}(R', D_s) \le B$ and $a(R', D_s) \ge a(R, D_s)$. Since no subset can exceed the full pool's accuracy (Section~\ref{sec:refine-problem}), the accuracy constraint is equivalent to $a(R', D_s) = a(R, D_s)$. We reduce from \textsc{Set Cover}, which is \textsf{NP}-complete~\cite{karp1972reducibility}: given a universe $U = \{u_1, \dots, u_n\}$, a family $\mathcal{C} = \{C_1, \dots, C_m\}$ with $C_k \subseteq U$, and an integer $k$, decide whether $k$ of the sets cover $U$. We assume $\bigcup_k C_k = U$ (otherwise no cover exists and the instance is trivially negative).

\emph{Oracle semantics.} We instantiate the oracle as follows: every document carries a designated single-token {\em answer span}, and the oracle reproduces a document's answer exactly when the text it receives contains that span. This semantics is monotone (extra text never removes a correct answer), so it satisfies the assumption behind Property~\ref{prop:acc-monotone}; a rule set is correct on a document iff the union of its returned spans contains the answer span.

\emph{Construction.} In polynomial time we build a refinement instance. The sampled subcollection is $D_s = \{d_1, \dots, d_n\} \cup \{G_1, \dots, G_m\}$: one {\em element document} $d_j$ per element $u_j$, and one {\em ballast document} $G_k$ per set $C_k$. Every document has $T = 2mn + 1$ tokens, with its answer span at the last token. The pool has one rule per set, $R = \{r_1, \dots, r_m\}$, defined by what each rule returns:
\begin{itemize}
\item on element document $d_j$, rule $r_k$ returns $d_j$'s answer span if $u_j \in C_k$, and the empty span otherwise;
\item on ballast document $G_k$, rule $r_k$ returns the first $T-1$ tokens of $G_k$, missing its answer span, while every other rule returns the empty span.
\end{itemize}
Hence $\mathit{cr}(r_k, d_j) \in \{0, 1/T\}$, whereas $\mathit{cr}(r_k, G_k) = 1 - 1/T$ and $\mathit{cr}(r_{k'}, G_k) = 0$ for $k' \neq k$.

No rule's span contains a ballast document's answer span, so $a(R', G_k) = 0$ for every subset $R'$, including the full pool; the full pool answers exactly the $n$ element documents (each $d_j$ via some $C_k \ni u_j$), so $a(R, D_s) = \tfrac{n}{n+m}$. A subset $R'$ answers $d_j$ iff its union $\bigcup_{r_k \in R'} d_j^k$ contains $d_j$'s answer span, i.e., iff some $r_k \in R'$ has $u_j \in C_k$. Hence $a(R', D_s) = \tfrac{1}{n+m}\,\bigl|\{j : u_j \in \bigcup_{r_k \in R'} C_k\}\bigr|$, so
\[
a(R', D_s) = a(R, D_s) \iff \{C_k : r_k \in R'\} \text{ covers } U .
\] For the cost, the ballast document $G_k$ is private to rule $r_k$, so the union shares no cost across rules: each rule in $R'$ contributes $1 - 1/T$ through its ballast document, while the element documents add between $0$ and $n/T$ in total. Hence $\sum_{d \in D_s} \mathit{cr}(R', d) = |R'| + \delta$ with $-\tfrac{m}{T} \le \delta \le \tfrac{n}{T}$, so $|\delta| < \tfrac12$ since $T > 2mn$, and, since $|D_s| = n + m$,
\[
\mathit{cr}(R', D_s) = \frac{|R'| + \delta}{n + m} ,
\]
so $\mathit{cr}(R', D_s) \le \frac{k + \frac12}{n + m} \iff |R'| \le k$.

\emph{Equivalence.} Set the budget $B = \frac{k + \frac12}{n + m}$. If $\mathcal{C}$ has a cover of size at most $k$, the corresponding rules form a subset $R'$ with $a(R', D_s) = a(R, D_s)$ and $\mathit{cr}(R', D_s) \le B$. Conversely, any $R'$ with $a(R', D_s) = a(R, D_s)$ and $\mathit{cr}(R', D_s) \le B$ yields a cover $\{C_k : r_k \in R'\}$ of size $|R'| \le k$, since $|\delta| < \tfrac12$ and $|R'|$ is integral. The reduction is polynomial, so the decision problem is \textsf{NP}-hard. Hardness carries over to Problem~\ref{prob:refine} itself: at $\alpha = 0$ the feasible subsets are exactly those of maximum accuracy, so a minimum-cost feasible subset cannot be dominated and is pareto-optimal; solving Problem~\ref{prob:refine} thus yields the minimum feasible cost, which decides the budget question.
\end{proof}

The decision problem lies in \textsf{NP} whenever $a(R', D_s)$ and $\mathit{cr}(R', D_s)$ are polynomial-time computable, in which case it is \textsf{NP}-complete.
}


\setlength{\textfloatsep}{6pt}
\begin{algorithm}[bt]
\small
\DontPrintSemicolon
\LinesNumbered
\caption{Rule refinement (cost-effectiveness greedy)}
\label{alg:refine}
\KwIn{rule set $R$, sample $D_s$, tolerance $\alpha$}
\KwOut{refined subset $R'$}
\ForEach{$r \in R$}{
estimate $a(r,d_j)$ for all $d_j \in D_s$ with the cheap model; compute $\mathit{cr}(r,D_s)$ by token count\;\label{ln:score}
}
$R' \leftarrow \emptyset$;\quad $U \leftarrow \{d_j \in D_s : a(R,d_j)=1\}$\;\label{ln:init}
\While{$a(R',D_s) < a(R,D_s) - \alpha$}{\label{ln:stop}
$r^{*} \leftarrow \displaystyle\arg\max_{r \in R \setminus R'}\ \frac{|\{d_j \in U : a(r,d_j)=1\}|}{\mathit{cr}(r,D_s)}$\;\label{ln:pick}
$R' \leftarrow R' \cup \{r^{*}\}$;\quad $U \leftarrow U \setminus \{d_j : a(r^{*},d_j)=1\}$\;\label{ln:update}
}
\Return $R'$\;
\end{algorithm}

\subsection{Rule Refinement Algorithm}
\label{sec:refine-algorithm}

We now present Algorithm~\ref{alg:refine} to approximately solve Problem~\ref{prob:refine}. Given the rules $R$ from rule generation, \sys first labels, for every sampled document, which rules lead to a correct answer, and computes each rule's cost ratio (Line~\ref{ln:score}). Running the oracle (e.g., GPT-5.4) to estimate the accuracy for each rule would be expensive, so \sys uses a {\em cheap} proxy model (e.g., GPT-5.4-mini). The proxy is reliable because the text it reads per rule is small: even the union of all rules in $R$ has a cost ratio of only $0.07$ to $0.39$ (average $0.18$) across the datasets (the \emph{Cost ratio} column of Table~\ref{tab:rulegen-results}), and any single rule is smaller still. On such short inputs, the proxy matches the oracle's estimate of each rule's accuracy at a fraction of the cost. 

\sys then builds the subset greedily. The algorithm starts from an empty subset $R'$ and marks every {\em answerable} document, $U = \{d_j \in D_s : a(R, d_j) = 1\}$, as not yet covered (Line~\ref{ln:init}). In each step, \sys picks the rule with the best {\em cost-effectiveness}: the number of still-uncovered documents, divided by its cost ratio $\mathit{cr}(r, D_s)$, so a rule that answers many new documents cheaply is picked (Line~\ref{ln:pick}). \sys adds this rule $r^{*}$ to $R'$ and marks the documents it answers as covered  (Line~\ref{ln:update}). The algorithm repeats until $R'$'s accuracy $a(R', D_s)$ comes within $\alpha$ of $a(R, D_s)$ (Line~\ref{ln:stop}), then returns $R'$.


This greedy algorithm also alleviates overfitting as a side-effect. Our cost-effective rule notion (Line~\ref{ln:pick}) divides the number of documents in $U$ on which a rule is correct by that rule's cost ratio $\mathit{cr}(r,D_s)$. For two rules with the same cost ratio, \sys  selects the one correct on more documents of $U$, i.e., the rule with higher coverage $a(r,U)$ (Section~\ref{sec:problem-rules}). Such a rule is correct on many sampled documents because its answer pattern recurs across them, so it is more likely to be correct on the unsampled documents.

\noindent{\bf Approximation guarantee.} At $\alpha = 0$, Algorithm~\ref{alg:refine} is exactly the greedy algorithm for weighted \textsc{Set Cover}~\cite{vazirani2001approximation}: each answerable document is an element to cover, and each rule is a set weighted by its cost ratio $\mathit{cr}(r, D_s)$. We then present a provable guarantee below for this approximation algorithm. 


\begin{theorem}
\label{thm:refine-approx}
Let $n = |\{d_j \in D_s : a(R, d_j) = 1\}|$ be the number of answerable documents in $D_s$, and let $H_n = \sum_{i=1}^{n} \tfrac{1}{i} = O(\ln n)$. At $\alpha = 0$, the subset $R'$ returned by Algorithm~\ref{alg:refine} and the optimal $R^{*}$ of Problem~\ref{prob:refine} satisfy
\[
\mathit{cr}(R', D_s) \;\le\; H_n \sum_{r \in R^{*}} \mathit{cr}(r, D_s) .
\]
\end{theorem}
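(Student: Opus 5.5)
The plan is to reduce the statement to the classical analysis of greedy weighted \textsc{Set Cover}~\cite{vazirani2001approximation}. There are three steps: bound the cost of the returned union by the sum of its rules' cost ratios, bound that sum by the usual charging argument, and compare against $R^{*}$.

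\emph{Step 1 (subadditivity of cost).} For every $d_j$ we have $|\bigcup_{r\in R'} d_j^{r}| \le \sum_{r\in R'} |d_j^{r}|$. Dividing by $|d_j|$ and averaging over $D_s$ gives
\[
\mathit{cr}(R',D_s) \le W(R') := \sum_{r\in R'} \mathit{cr}(r,D_s).
\]
It therefore suffices to show $W(R') \le H_n\, W(R^{*})$.

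\emph{Step 2 (set-cover view and feasibility of $R^{*}$).} Let $U_0=\{d_j\in D_s : a(R,d_j)=1\}$ be the set of answerable documents, so $|U_0|=n$. Each rule $r$ corresponds to the set $S_r=\{d_j\in U_0 : a(r,d_j)=1\}$ with weight $\mathit{cr}(r,D_s)$. At $\alpha=0$, the optimum $R^{*}$ satisfies $a(R^{*},D_s)=a(R,D_s)$. By Property~\ref{prop:acc-monotone}, the documents on which $R^{*}$ is correct form a subset of $U_0$, and the equal counts force them to be exactly $U_0$.

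To conclude that the sets $\{S_r : r\in R^{*}\}$ cover $U_0$, I need that a rule set is correct on $d_j$ iff some member rule is individually correct. This holds under the answer-span semantics used in the proof of Theorem~\ref{thm:refine-hardness}, and it is exactly the modeling implicit in Algorithm~\ref{alg:refine}, whose bookkeeping of $U$ is per rule. I expect this to be the main obstacle: monotonicity alone does not rule out a union being correct while no single rule is. I would state the identification explicitly as the setting in which Algorithm~\ref{alg:refine} coincides with greedy weighted \textsc{Set Cover}, as the paper asserts.

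Under that identification, $R^{*}$ is a feasible cover of weight $W(R^{*})$. The same identification also shows the loop is well defined: while $U\ne\emptyset$, some rule in $R^{*}$ (hence in $R\setminus R'$) covers an element of $U$. The loop may also terminate before $U$ is emptied; this only removes rules from $R'$ and so can only lower $W(R')$.

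\emph{Step 3 (charging).} Suppose that just before some greedy pick, $k$ documents remain in $U$. The rules of $R^{*}$ cover these $k$ documents with total weight at most $W(R^{*})$. Hence some $r\in R^{*}$ has ratio $|S_r\cap U|/\mathit{cr}(r,D_s)\ge k/W(R^{*})$. If some rule has zero cost ratio, its ratio is infinite, it is picked at zero cost, and it can be ignored.

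The greedy choice $r^{*}$ has at least this ratio. Spread its cost $\mathit{cr}(r^{*},D_s)$ evenly over the documents it newly covers. Each such document then receives a price of at most $W(R^{*})/k$.

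Order the documents of $U_0$ by the time they are covered, and let the $i$-th covered one be charged when $k\ge n-i+1$ documents remained. Its price is then at most $W(R^{*})/(n-i+1)$. Summing over $i$ gives
\[
W(R') \le W(R^{*})\sum_{i=1}^{n}\frac{1}{n-i+1} = H_n\, W(R^{*}).
\]
Combining this with Step 1 yields
\[
\mathit{cr}(R',D_s)\le H_n\sum_{r\in R^{*}}\mathit{cr}(r,D_s),
\]
which is the claimed bound.
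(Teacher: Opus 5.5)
Your proof is correct and follows the paper's argument. Subadditivity of the union gives $\mathit{cr}(R',D_s)\le\sum_{r\in R'}\mathit{cr}(r,D_s)$, and the rest is Chv\'atal's greedy bound for weighted \textsc{Set Cover} with $R^{*}$ as a feasible cover, which you re-derive by the standard charging argument where the paper simply cites it. The identification you flag (a rule set is correct on a document iff some member rule is individually correct) is used tacitly in the paper's proof too: it relies on it when it calls $R^{*}$ a feasible cover and when it says the loop ends once every answerable document is answered. Stating it explicitly therefore makes your argument more rigorous rather than leaving a gap.
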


The theorem follows from Chv\'atal's analysis~\cite{chvatal1979greedy} of the greedy algorithm for weighted \textsc{Set Cover}. If the sample size $|D_s| = 20$ (assuming every sampled document is answerable), then $H_n \le 3.6$. 
\shortonly{ We defer the proof and the $\alpha > 0$ case to the technical report~\cite{scoutextended}.}

\longonly{

\begin{proof}
The first inequality is $\mathit{cr}(R', D_s) \le \sum_{r \in R'} \mathit{cr}(r, D_s)$, since on every document the union of the spans of $R'$ has at most as many tokens as those spans counted one rule at a time. For the second, Algorithm~\ref{alg:refine} at $\alpha = 0$ is Chv\'atal's greedy~\cite{chvatal1979greedy} on the weighted \textsc{Set Cover} instance whose universe is the $n$ answerable documents, with one set per rule $r$ holding the documents $r$ is correct on and weight $\mathit{cr}(r, D_s)$: Line~\ref{ln:pick} admits the rule of least weight per newly covered document, and the loop (Line~\ref{ln:stop}) ends once every answerable document is answered, since at $\alpha = 0$ this is exactly when the target $a(R, D_s)$ is met. Chv\'atal's theorem~\cite{chvatal1979greedy} bounds the total weight of the greedy's cover by $H_n$ times that of any feasible cover. At $\alpha = 0$ the constraint of Problem~\ref{prob:refine} forces $a(R^{*}, D_s) = a(R, D_s)$, so $R^{*}$ is correct on all $n$ answerable documents and is a feasible cover, giving $\sum_{r \in R'} \mathit{cr}(r, D_s) \le H_n \sum_{r \in R^{*}} \mathit{cr}(r, D_s)$.
\end{proof}
\vspace{-2mm}
For $\alpha > 0$, Algorithm~\ref{alg:refine} stops once it covers enough documents to stay within tolerance; this is greedy {\em partial cover}, and Slav\'ik's analysis~\cite{slavik1997partial} gives the analogous bound $H_{n'}$, where $n'$ is the number of documents the subset must cover. Finally, the labels $a(r, d_j)$ in Line~\ref{ln:score} come from a cheap proxy model, so the guarantee holds with respect to these labels, which are reliable on the short spans that rules return, as discussed above.
}

\section{Robustness of \sys}
\label{sec:robustness}

So far, rule refinement returns a rule set that is accurate and cost-efficient on the sample $D_s$. Multiple challenges remain, however, before these rules can be applied on  unsampled documents. One is that the rules learned on the sample may overfit, performing well on the sample but poorly on unseen documents. To reduce overfitting, we further introduce two strategies: (1) a sampling strategy that ensures that $D_s$ covers as many data patterns to locate answers as possible so that the learned rules are correct on $\mathcal{D}\setminus D_s$ (Section~\ref{sec:sampling}), and (2) a rule application strategy that falls back to the  rule set returned by the agent when the refined rules don't contain the correct answer (Section~\ref{sec:rule-application}). 
We first define {\em document similarity} by defining the concept of {\em document clusters} below, followed by presenting the above strategies. 


\noindent{\bf Document similarity.} Given a  query $Q$ and a set of documents $\mathcal{D}$, a document subset $C_j \subseteq \mathcal{D}$ is called a  {\em cluster} if there exists a rule $r_i$ that, when applied to $C_j$, incurs a small cost, i.e., $cr(r_i, C_j) < \delta$ ($\delta$ is a small constant fraction, say 0.1), and is correct on every document in $C_j$, i.e., $a(r_i,C_j)=1$. Such a rule $r_i$ on $C_j$ implies that documents in $C_j$ are {\em similar} since there exists a data pattern encoded by the rule $r_i$ (e.g., to return the span of a table whose caption matches a keyword) that can be used to extract data correctly with low cost from $C_j$. In this case, we also say $r_i$ {\em covers} the cluster $C_j$. Consider the example in Figure~\ref{fig:clusters}(a), where $\mathcal{D}$, represented by colored shapes, is split into three clusters, each covered by one of the rules $r_1$, $r_2$, and $r_3$, while the right cluster in Figure~\ref{fig:clusters}(b) is covered by $r_2 \cup r_3$ (returning the union of the spans of $r_2$ and $r_3$), where neither $r_2$ nor $r_3$ can cover this cluster individually.



\begin{figure}
\centering
\includegraphics[width=1\columnwidth]{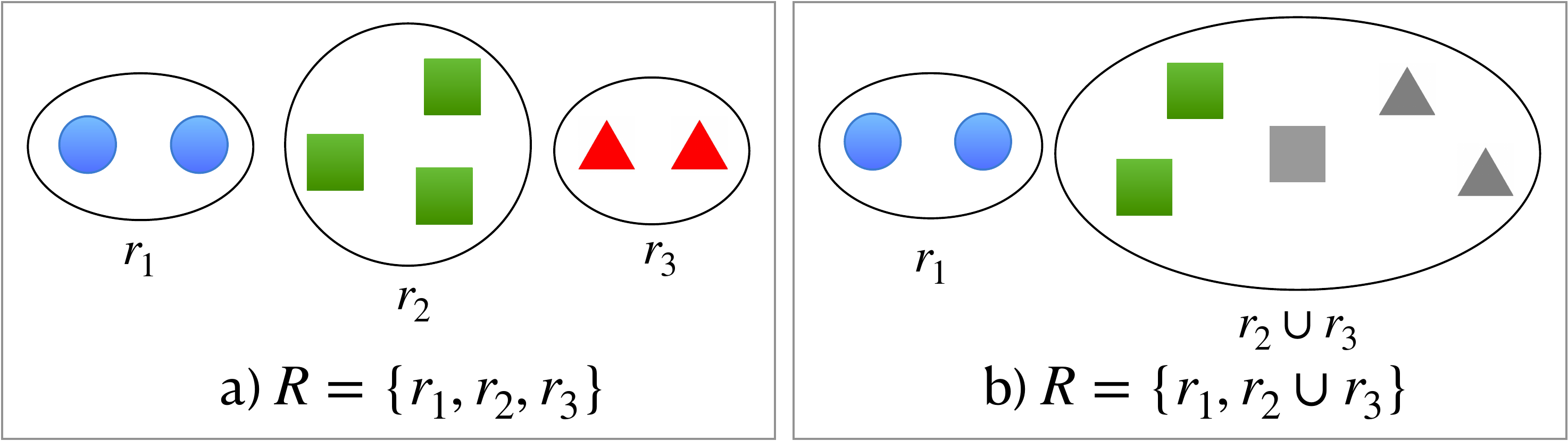}
\vspace{-7mm}
\caption{\small Documents cluster by where their answers to $Q$ reside, and each rule covers one cluster of similar documents.}
\label{fig:clusters}
\vspace{-2mm}
\end{figure}




\subsection{Sampling}
\label{sec:sampling}

Ideally, an effective sample draws documents from every cluster of $\mathcal{D}$, so that \sys sees all possible data patterns to locate answers and learns  rules that are less likely to overfit. 
Generating such a sample is non-trivial, as we do not have rules before sampling, and using the oracle to verify whether a rule is correct on a document subset is too expensive. To this end, we approximate a {\em document distance} that is used for sampling {\em without invoking the oracle}. 

\noindent{\bf Approximate Document Distance.} Our key insight is that a good distance metric should not measure how similar the contents of two documents $d_i$ and $d_j$ are, but whether their answers to the query $Q$ can be represented by using similar data patterns, e.g., in {\em similar} locations.  
Consider a document $d_i\in \mathcal{D}$ and a query $Q$. \sys splits $d_i$ into $m$ chunks and computes the embedding similarity $s_k$ between $Q$ and each chunk $k$. It stores these values in a vector $V_i = [s_1, \dots, s_m]$, ordered by the reading order of the chunks in $d_i$. $V_i$ approximates how similarity to $Q$ is distributed across chunks in $d_i$. Given two documents $d_i$ and $d_j$, we define the distance between their vectors $V_i$ and $V_j$ as the cosine distance
$\mathit{dis}_{i,j} = 1 - \frac{V_i \cdot V_j}{\lVert V_i \rVert\, \lVert V_j \rVert}$. 

Vector $V_i$ approximates the likelihood that document subsets (represented by chunks) contain the answer to $Q$ and the relative locations of the answers. Cosine distance $\mathit{dis}_{i,j}$ measures how similar the relative locations of answers to $Q$ are in documents $d_i$ and $d_j$. A small $\mathit{dis}_{i,j}$ indicates that $d_i$ and $d_j$ place their answers in similar relative positions, so a single rule that encodes a recurring data pattern (e.g., a section or a page) is likely to cover both. 
The number of chunks $m$ controls the granularity of the similarity curve: a larger $m$ gives a finer-grained curve. We empirically set $m = 50$, which yields a reliable estimate. While this distance metric is not exact, we show that it is effective.  When two documents in the same true cluster have a large $\mathit{dis}_{i,j}$ and thus are split apart, we usually end up picking too many documents rather than too few, which is permissible as we don't leave a cluster unrepresented. 





\setlength{\textfloatsep}{6pt}
\begin{algorithm}[t]
\small
\DontPrintSemicolon
\LinesNumbered
\caption{Farthest-point sampling}
\label{alg:fps}
\KwIn{collection $\mathcal{D}$, vectors $\{V_i\}$, distance $\mathit{dis}$, stop ratio $\rho = 0.5$}
\KwOut{sample $S$}
$d_1 \leftarrow$ document farthest from the centroid of $\mathcal{D}$\;\label{ln:fps-seed}
$S \leftarrow \{d_1\}$;\quad $g_{\mathrm{prev}} \leftarrow \infty$\;
\While{$S \neq \mathcal{D}$}{
$d^{*} \leftarrow \displaystyle\arg\max_{d_i \in \mathcal{D} \setminus S}\ \min_{d_j \in S} \mathit{dis}_{i,j}$\;\label{ln:fps-pick}
$g \leftarrow \displaystyle\min_{d_j \in S} \mathit{dis}_{d^{*},j}$\;\label{ln:fps-gap}
\If{$|S| \ge 2$ \textbf{and} $g < \rho \cdot g_{\mathrm{prev}}$}{\textbf{break}\;\label{ln:fps-stop}}
$S \leftarrow S \cup \{d^{*}\}$;\quad $g_{\mathrm{prev}} \leftarrow g$\;\label{ln:fps-update}
}
\Return $S$\;
\end{algorithm}


\noindent{\bf Farthest-point Sampling.} We now present Algorithm~\ref{alg:fps} to select a sample. \sys seeds the sample $S$ with the document farthest from the centroid of the collection $\mathcal{D}$ under $\mathit{dis}$ (Line~\ref{ln:fps-seed}). \sys then repeatedly takes the candidate whose distance to $S$ is the largest, where a document's distance to $S$ is its smallest distance to any document in $S$, and stores that distance as the {\em gap} $g = \max_{d_i \in \mathcal{D} \setminus S} \min_{d_j \in S} \mathit{dis}_{i,j}$ (Lines~\ref{ln:fps-pick}--\ref{ln:fps-gap}). \sys admits the candidate unless $g$ drops below $\rho$ times the previous gap, in which case it stops (Lines~\ref{ln:fps-stop}--\ref{ln:fps-update}). 

$g$ is non-increasing by construction, since adding a document to $S$ can only reduce the distance of every remaining document to $S$. While some cluster has no document in $S$, $g$ is at least the smallest distance between two documents in different clusters; once every cluster has a document in $S$, the next candidate is in a cluster that has already been sampled. Hence $g$ is at most the largest distance between two documents in the same cluster. 

We now analyze the returned sample $S$. Let $g^{*}$ be the value of the gap $g$ when Algorithm~\ref{alg:fps} stops at Line~\ref{ln:fps-stop}. 
Let $\mathcal{D}$ partition into clusters $\{C_1, \dots, C_k\}$ with $k \ge 2$, and measure all distances using $\mathit{dis}$. The {\em separation} of a document subset $U \subseteq \mathcal{D}$ is the smallest distance between a document in $U$ and a document outside $U$; for a cluster $C_j$, let $\delta_j = \min_{d_a \in C_j,\, d_b \notin C_j} \mathit{dis}_{a,b}$ be its separation.

%
%
%
%

\begin{theorem}[Cluster coverage]
\label{thm:fps}
The sample returned by Algorithm~\ref{alg:fps} contains at least one document from every cluster whose separation is greater than $g^{*}$.
\end{theorem}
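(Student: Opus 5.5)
The plan is a direct contradiction argument that looks only at the final iteration of Algorithm~\ref{alg:fps}. The key observation is that the gap $g$ is computed against the current sample $S$. Also, at the break on Line~\ref{ln:fps-stop} the candidate $d^{*}$ is \emph{not} added, so $S$ at that moment is exactly the returned sample. Hence $g^{*}$ is the farthest-point distance from the returned sample to the remaining documents:
\[
g^{*} \;=\; \max_{d_i \in \mathcal{D}\setminus S}\ \min_{d_j \in S} \mathit{dis}_{i,j}.
\]

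First I will dispose of the degenerate termination. If the loop exits because $S = \mathcal{D}$, no break occurs and $g^{*}$ is not attained at Line~\ref{ln:fps-stop}. In that case $S$ trivially meets every cluster, so the claim holds vacuously or trivially. I will therefore assume the algorithm stops at Line~\ref{ln:fps-stop} with returned sample $S$ and gap $g^{*}$ as above.

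Now suppose, for contradiction, that some cluster $C_j$ with $\delta_j > g^{*}$ has no document in $S$. Since $C_j$ is nonempty, pick any $d_a \in C_j$; then $d_a \in \mathcal{D}\setminus S$. Every $d_b \in S$ lies outside $C_j$, so by the definition of separation $\mathit{dis}_{a,b} \ge \delta_j$. Taking the minimum over $d_b \in S$ (which is nonempty, since it contains the seed) gives $\min_{d_b \in S}\mathit{dis}_{a,b} \ge \delta_j$. Because $g^{*}$ is the maximum of this quantity over all unsampled documents, $g^{*} \ge \delta_j$, contradicting $\delta_j > g^{*}$. So every such cluster is hit by $S$.

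The argument needs only that the clusters partition $\mathcal{D}$ and the definitions of separation and gap; it does not use the triangle inequality, so cosine distance failing to be a metric causes no issue. The only step needing care, and the one I expect to be the main obstacle, is the bookkeeping in the first paragraph. I must check that the gap at the stopping step is measured against the same $S$ that is returned, not against $S \cup \{d^{*}\}$ and not against the previous iteration's sample. This follows from reading Lines~\ref{ln:fps-pick}--\ref{ln:fps-update} in order: $g$ is computed before the break test, and the update on Line~\ref{ln:fps-update} is skipped when the break fires.
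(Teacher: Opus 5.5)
Your proof is correct and is essentially the paper's own argument (its ``Step 3: coverage''): at the break, $g^{*}$ is the covering radius $\max_{d_i \in \mathcal{D}\setminus S}\min_{d_j \in S}\mathit{dis}_{i,j}$ of the returned $S$, so an unsampled cluster with separation above $g^{*}$ gives a contradiction. The paper also dismisses the $S=\mathcal{D}$ exit as trivial, and it states the same argument for an arbitrary union of clusters, which gives Theorem~\ref{thm:fps-group} as well; as you note, its monotone-gap and packing steps are not needed for this theorem.
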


\shortonly{ We give the proofs of Theorems~\ref{thm:fps} and~\ref{thm:fps-group} (to be described below) in~\cite{scoutextended}, with a sketch below.} Every document outside $S$ is within $g^{*}$ of some document in $S$ by  definition, and every two documents in $S$ are at distance greater than $2\,g^{*}$, since gaps never increase and $g^{*} < 0.5\,g_{\mathrm{prev}}$. If a cluster with separation exceeding $g^{*}$ had no document in $S$, all its documents would be farther than $g^{*}$ from $S$. 
The same argument applies to the union of two clusters, which yields a stronger guarantee.

\begin{theorem}[Group coverage]
\label{thm:fps-group}
For two clusters $C_i$ and $C_j$, if the separation of $C_i \cup C_j$ is greater than $g^{*}$, then the sample returned by Algorithm~\ref{alg:fps} contains at least one document from $C_i$ or from $C_j$.
\end{theorem}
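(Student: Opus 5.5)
We prove Theorem~\ref{thm:fps-group} by contradiction, reusing the covering property behind Theorem~\ref{thm:fps}. Applied to an arbitrary subset of documents, that property says a set whose separation exceeds $g^{*}$ cannot be missed by $S$. We take this subset to be $U = C_i \cup C_j$.

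\textbf{Step 1: the covering property of $S$.} When Algorithm~\ref{alg:fps} stops at Line~\ref{ln:fps-stop}, the rejected candidate $d^{*}$ is the maximizer of $\min_{d_b \in S} \mathit{dis}_{a,b}$ over $d_a \in \mathcal{D}\setminus S$. Its value is $g^{*}$. Hence every $d_a \notin S$ has some $d_b \in S$ with $\mathit{dis}_{a,b} \le g^{*}$. If the loop instead ends because $S = \mathcal{D}$, the claim is trivial, since $S$ meets every nonempty cluster. We therefore assume the stop at Line~\ref{ln:fps-stop}.

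\textbf{Step 2: contradiction.} Suppose $S$ contains no document of $C_i$ and no document of $C_j$, so $S \cap U = \emptyset$. Pick any $d_a \in C_i$; clusters are nonempty parts of the partition. Since $d_a \notin S$, Step~1 gives a $d_b \in S$ with $\mathit{dis}_{a,b} \le g^{*}$. By assumption $d_b \notin U$. So $d_a \in U$ and $d_b \notin U$ form a pair at distance at most $g^{*}$. By definition the separation of $U$ is at most $g^{*}$, which contradicts the hypothesis. Therefore $S$ contains a document of $C_i$ or of $C_j$.

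\textbf{Main obstacle.} The only delicate point is Step~1: we must be sure that $g^{*}$ really bounds the distance from \emph{every} unsampled document to $S$. This holds because the gap is computed as a maximum over all of $\mathcal{D}\setminus S$ against the final $S$, and the rejected $d^{*}$ does not enter $S$. We state this explicitly, and we note that the $2g^{*}$ spacing within $S$, used in the sketch, is not needed here.

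Finally, this is a genuine strengthening of Theorem~\ref{thm:fps}. Since $\mathrm{sep}(C_i \cup C_j)$ ignores the distances between $C_i$ and $C_j$, it can exceed $g^{*}$ even when $\delta_i$ and $\delta_j$ do not.
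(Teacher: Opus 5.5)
Your proof is correct and follows essentially the same route as the paper. The paper's Step~3 also derives from Lines~\ref{ln:fps-pick}--\ref{ln:fps-gap} that every unsampled document lies within $g^{*}$ of the final $S$, argues by contradiction on the union $U = C_i \cup C_j$, and treats the $S = \mathcal{D}$ case as trivial; like yours, that coverage step does not use the $2g^{*}$ diversity property.
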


Theorem~\ref{thm:fps} does not hold when a cluster, say $C_i$, has separation at most $g^{*}$. In this case, let $C_j$ be the cluster that is close to $C_i$, i.e., $\min_{d_a \in C_i,\, d_b \in C_j} \mathit{dis}_{a,b} \le g^{*}$. By Theorem~\ref{thm:fps-group}, if $C_i \cup C_j$ is far from the rest, i.e., the separation of $C_i \cup C_j$ is greater than $g^{*}$, then $C_i$ and $C_j$ cannot both be missed. This indicates that when Theorem~\ref{thm:fps} does not hold, at least one document in $C_j$ that is {\em similar}  to those in $C_i$ will be sampled, so that documents far from all others are not left unrepresented, ensuring a representative sample.

\longonly{One condition on the clusters complements Theorem~\ref{thm:fps} (proof below). Let $\delta_{\mathit{intra}}$ be the largest distance between two documents in the same cluster, i.e., $\delta_{\mathit{intra}} = \max_{1 \le j \le k}\, \max_{d_a, d_b \in C_j} \mathit{dis}_{a,b}$, and let $\delta_{\mathit{inter}} = \min_{1 \le j \le k} \delta_j$ be the smallest separation; the clusters are {\em well-separated} if $\delta_{\mathit{inter}} > 2\,\delta_{\mathit{intra}}$. For well-separated clusters, the sample contains at most one document from each cluster; together with Theorem~\ref{thm:fps}, the sample then contains exactly one document from each cluster whose separation exceeds $g^{*}$.}
Theorem~\ref{thm:fps} does not bound the sample size: Algorithm~\ref{alg:fps} may admit several documents from the same cluster. This is permissible since it does not leave any cluster unrepresented. We therefore cap the sample size at $\min(|S|,\, 0.1\,|\mathcal{D}|)$, where $|S|$ is the sample size Algorithm~\ref{alg:fps} returns. The cap keeps the cost of rule generation low and is empirically effective, as observed in Section~\ref{sec:experiments}.

\longonly{

We now prove Theorems~\ref{thm:fps} and~\ref{thm:fps-group}. We work in the defined distance space, where each document $d_i$ is the point $V_i$ and distances are the cosine distance $\mathit{dis}_{i,j}$. Write $p_1, \dots, p_T$ for the documents Algorithm~\ref{alg:fps} admits, in order ($p_1$ the seed), and $g_t$ ($2 \le t \le T$) for the gap of $p_t$ at its admission (Line~\ref{ln:fps-gap}); when Algorithm~\ref{alg:fps} stops at Line~\ref{ln:fps-stop}, the condition there gives $g^{*} < 0.5\, g_T$; if the condition never holds, the loop ends with $S = \mathcal{D}$ and both theorems hold trivially. The argument never invokes the triangle inequality, which the cosine distance does not satisfy.

\emph{Step 1: gaps are non-increasing, $g_2 \ge g_3 \ge \dots \ge g_T \ge g^{*}$.} Admitting a document can only reduce the distance of every remaining document to $S$, and the document attaining the previous maximum is no longer a candidate, so the next maximum does not grow.

\emph{Step 2: diversity.} Consider two picks $p_a, p_b$ with $a < b$. When $p_b$ was admitted, $p_a$ was already in $S$, so $\mathit{dis}_{p_a, p_b} \ge g_b \ge g_T > 2\,g^{*}$, where the last inequality is the condition in Line~\ref{ln:fps-stop}.

\emph{Step 3: coverage.} By Lines~\ref{ln:fps-pick}--\ref{ln:fps-gap}, $g^{*}$ is the largest distance from a document outside $S$ to the closest document in $S$, so every document outside $S$ is within $g^{*}$ of some document in $S$. Now suppose a union $U$ of clusters with separation greater than $g^{*}$ had no document in $S$; then every document in $U$ would be at distance greater than $g^{*}$ from every document in $S$, all of which lie outside $U$, contradicting Step 3. Theorem~\ref{thm:fps} is the case where $U$ is a single cluster, and Theorem~\ref{thm:fps-group} the case where $U$ is the union of two. The two properties mirror Gonzalez's analysis of farthest-point traversal for $k$-center~\cite{gonzalez1985clustering}, where the selected points form both a packing and a cover; the condition in Line~\ref{ln:fps-stop} certifies the factor of two between the two radii.


\emph{At most one document per cluster.} Suppose the clusters are well-separated. First, while some cluster has no document in $S$, every pick lands in such a cluster: each document of a cluster that already has a document in $S$ is within $\delta_{\mathit{intra}}$ of that document, while every document of a cluster with none is at distance at least $\delta_{\mathit{inter}} > \delta_{\mathit{intra}}$ from every document in $S$, so farthest-point selection (Line~\ref{ln:fps-pick}) prefers the latter. Second, once every cluster has a document in $S$, the next candidate is within $\delta_{\mathit{intra}}$ of its cluster's document in $S$, while $g_{\mathrm{prev}}$ is at least $\delta_{\mathit{inter}}$, since the last admitted document was picked while its cluster had no document in $S$; well-separation gives $\delta_{\mathit{intra}} < 0.5\,\delta_{\mathit{inter}} \le 0.5\,g_{\mathrm{prev}}$, so the condition in Line~\ref{ln:fps-stop} holds and Algorithm~\ref{alg:fps} stops. The sample therefore never contains two documents from the same cluster.

\emph{Remark (uneven separations).} A cluster whose separation is at most $g^{*}$ can be missed, e.g., when one cluster sits far from the rest and another lies close to a sampled one; this is the identifiability limit of distance-based clustering. Enriching the distance features (stronger embeddings or finer chunking) widens the separations and mitigates it. Finally, for $k = 1$ the condition in Line~\ref{ln:fps-stop} requires $|S| \ge 2$, so \sys draws one extra document, a harmless over-sample.
}



\subsection{Rule Application Strategy}
\label{sec:rule-application}

While \sys generates a sample to reduce overfitting, the selected rules may still fail on unsampled documents. We now present a rule application strategy to further improve \sys's robustness. It uses a cheap proxy to detect when a refined rule is unlikely to reproduce the oracle answer, and then falls back to the rule set returned by the agent during rule generation. It also picks a subset of the refined rules to further reduce cost.

Recall that rule generation produces a rule set $R$, and rule refinement (Algorithm~\ref{alg:refine}) selects a subset $R' \subseteq R$, where $R' = \{r_1,\dots,r_k\}$. Algorithm~\ref{alg:refine} adds these rules to $R'$ one at a time, each time choosing the rule of highest {\em cost-effectiveness}: the number of sampled documents it is correct on that no rule already in $R'$ is correct on, divided by its cost ratio $\mathit{cr}(r, D_s)$ (Line~\ref{ln:pick}). We thus treat $R'$ as the ordered list $[r_1,\dots,r_k]$ in that order. We now present a cascade rule application with a fallback.

Given an unsampled document $d_i$, \sys scans $R'$ in the increasing order of the index. After scanning the first $j$ rules, it applies them to $d_i$ and forms the union of their returned spans, $U_j = \bigcup_{l=1}^{j} d_i^{l}$. It then asks a cheap proxy $P$ (e.g., GPT-5.4 mini) whether $U_j$ can answer $Q$, using the prompt in Figure~\ref{fig:proxy-prompt}. If $P$ returns true, \sys stops and calls the oracle $O$ on $U_j$ to produce the answer $A_i$; otherwise it adds the next rule and repeats. If no prefix passes the proxy after all $k$ rules have been tried, \sys falls back to the rule set $R$ and runs the oracle on the union $\bigcup_{r_l \in R} d_i^{l}$ of all its spans. 

Falling back to $R$ is reasonable because $R$ is accurate (matching the accuracy of the golden baseline in Table~\ref{tab:rulegen-results}) at a modest cost (a cost ratio of $0.07$ to $0.39$). 
Using a cheap model as the proxy $P$ is effective. First, the proxy is far cheaper than the oracle, so verifying at each step adds little cost. Second, the task is simple: the refined rules in $R'$ are cost-effective, so the span they return is short, and deciding whether a short, focused span answers $Q$ is far simpler than answering $Q$ itself. 

\begin{figure}
\centering
\begin{tcolorbox}[colback=white,colframe=black!60,boxrule=0.5pt,arc=0pt,left=5pt,right=5pt,top=4pt,bottom=4pt]
\footnotesize\ttfamily
\textbf{\sys-Proxy-Verification-Prompt:} You are given a question \ph{Question} and a span of text \ph{Text} retrieved from a document. Decide whether \ph{Text} contains enough information to answer \ph{Question}.

\textbf{Output:} Return \texttt{true} if the text is sufficient to answer the question, and \texttt{false} otherwise.
\end{tcolorbox}
\vspace{-5mm}
\caption{\small The proxy-verification prompt. \ph{Question} and \ph{Text} are placeholders for the task $Q$ and the union of retrieved spans.}
\label{fig:proxy-prompt}
\vspace{-2mm}
\end{figure}

\shortonly{\vspace{-2mm}}
\section{Experiments}
\label{sec:experiments}

We evaluate \sys on six real-world datasets, describing the setup in Section~\ref{sec:exp-setup} and presenting the results in Section~\ref{sec:exp-results}.

\shortonly{\vspace{-2mm}}
\subsection{Experimental Setup}
\label{sec:exp-setup}

\subsubsection{Datasets}
\label{sec:exp-datasets}

We evaluate \sys on six datasets from distinct real-world domains (Table~\ref{tab:dataset-stats}): \emph{Court}~\cite{courtlistener} (U.S.\ federal appeal opinions), \emph{FinanceBench}~\cite{islam2023financebench} (SEC 10-K/10-Q filings), \emph{NoPV}~\cite{phmsanopv} (PHMSA notices of probable violation), \emph{OfficeQA}~\cite{opsahl2026officeqa} (U.S.\ Treasury Bulletins), \emph{Product}~\cite{emaepar} (EMA drug product information), and \emph{Tropic}~\cite{nhctcr} (NHC tropical cyclone reports). Queries target each domain's structured values, such as case metadata and dispositions, financial figures, cited CFR sections, fiscal indicators, drug attributes, and storm intensity and casualties. The datasets vary widely in scale: documents average from roughly $3{,}300$ to $110{,}000$ tokens, and we include $12$ to $16$ queries per collection. Although each collection is relatively small, \sys already yields substantial savings as shown later. Its cost savings  grow with collection size, and we further extrapolate its cost to $1{,}000$-document collections (Section~\ref{sec:exp-results}).

The query workload spans five types: single-hop lookup (e.g., one stated value), multi-hop lookup (e.g., several named facts), enumeration (e.g., all members of a set), aggregation (e.g., a set reduced to a scalar), and classification (e.g., a label from a closed set). Single-hop lookups form $35\%$ (Court) to $76\%$ (Product) of the queries, $56\%$ on average; the other types often require reasoning over multiple spans across the document. Here, we consider a single query over a collection, while jointly handling multiple queries is left to future work, where semantically related queries may share rules and open further optimization opportunities.

For FinanceBench and OfficeQA, we use the ground truth from  existing benchmarks~\cite{islam2023financebench,opsahl2026officeqa}. For the remaining datasets, we label the answers with two independent LLM labelers, an agent and a vanilla LLM call, both powered by top-of-the-line LLM Claude Opus 4.7, given the complete document as input. We manually verify the answers on which the two labelers disagree. 



\begin{figure*}
\centering
\includegraphics[width=\textwidth]{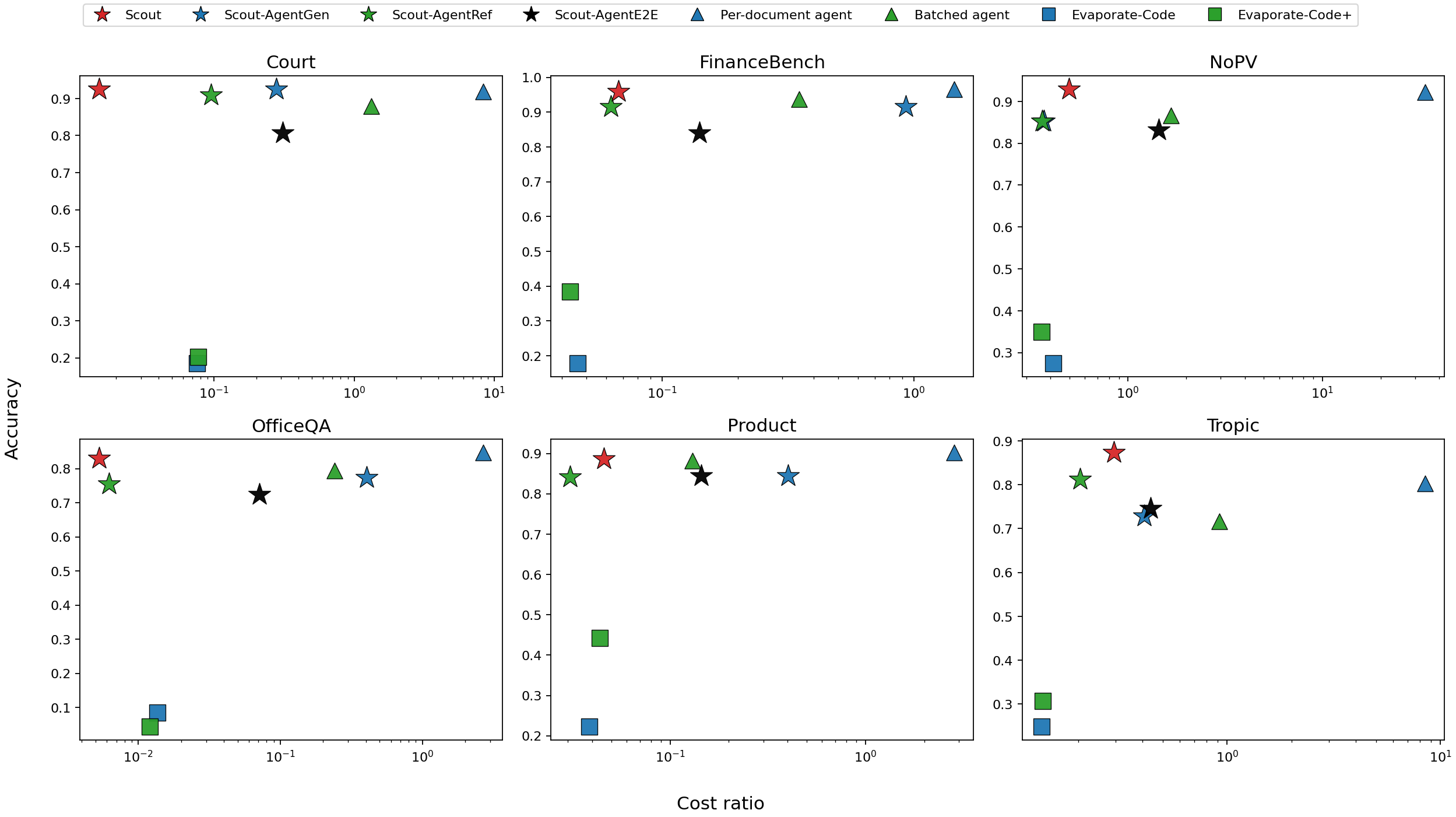}
\vspace{-7mm}
\caption{\small Accuracy versus amortized cost ratio across the six datasets. Up and to the left is better (higher accuracy, lower cost). }
\label{fig:overall}
\vspace{-3mm}
\end{figure*}

\begin{table}[t]
\centering
\small
\setlength{\tabcolsep}{4pt}
\renewcommand{\arraystretch}{1.15}
\begin{tabular*}{\columnwidth}{@{\extracolsep{\fill}}lrrr@{}}
\toprule
\textbf{Dataset} & \textbf{\# Docs} & \textbf{Avg tokens} & \textbf{\# Queries} \\
\midrule
Court        & 294 & 11{,}760  & 13 \\
FinanceBench & 100 & 60{,}539  & 12 \\
NoPV         & 242 & 3{,}293   & 12 \\
OfficeQA     & 200 & 110{,}604 & 16 \\
Product      & 200 & 34{,}970  & 13 \\
Tropic       & 200 & 9{,}789   & 14 \\
\bottomrule
\end{tabular*}
\caption{\small Datasets used in the experiments. \emph{\# Docs} is the number of documents, \emph{Avg tokens} the average number of tokens per document, and \emph{\# Queries} the number of extraction queries per dataset.}
\label{tab:dataset-stats}
\vspace{-4mm}
\end{table}

\shortonly{\vspace{-2mm}}
\subsubsection{Baselines and Ablations}
\label{sec:exp-baselines}

We evaluate \sys against two groups of methods: baselines and ablations that replace \sys's components with an agent.

\noindent{\bf Baselines.} For document extraction, agentic question answering is the state of the art, outperforming retrieval-augmented (RAG) and long-context baselines by up to $16$ points on long-document QA benchmarks~\cite{arag,longagent,graphreader,sun2025docagent}. 
We instantiate the agent as Codex~\cite{codex}, powered by GPT-5.4, and give it all the tools Codex provides. A few tools retrieve spans by keyword and regex matching or by embedding similarity, and reason over them. The agent then answers a query iteratively: at each step it issues a tool call, observes the result, and chooses the next call, until it has gathered enough evidence to answer. Note that RAG can be viewed as a special case of the agent, retrieving once based on embedding and then answering.

Using this step, we adopt varied agent-based methods as our baselines. The \emph{per-document agent} runs this agent on one query and one full document at a time, while the \emph{batched agent} instead presents one query together with all documents at once and lets the agent answer the query per document. Finally, \emph{Evaporate}~\cite{evaporate} uses an LLM to synthesize Python programs for data extraction directly, invoking no LLM within the programs; we consider two variants: \emph{Evaporate-Code} synthesizes a single program based on sampled documents, while \emph{Evaporate-Code+} synthesizes many and ensembles their extractions with weak supervision. 

\noindent{\bf Ablations.} \sys runs a four-step pipeline: sampling, rule generation, rule refinement, and rule application. To isolate the contribution of each of \sys's components, we compare it against ablations that hand some steps to an agent while keeping the rest of the pipeline fixed (Table~\ref{tab:ablation-strategies}). Rule application is always performed by \sys, and each ablation delegates a different part of the pipeline to the agent. \emph{\sys-AgentRef} delegates rule refinement to the agent, under the same accuracy and cost objective as Problem~\ref{prob:refine} (i.e., to minimize the cost of rules while enforcing that their accuracy matches that of the rule set returned by \sys). \emph{\sys-AgentGen} delegates both rule generation and refinement: the agent instead learns rules directly from the sampled documents, under that same objective.    \emph{\sys-AgentE2E} delegates the entire offline pipeline (steps~1 to~3), where the agent learns the rules end-to-end from all the data and effectively samples on its own. Comparing these variants reveals whether \sys's specialized components outperform a general agent performing the same steps. 
For the agents used in these ablations, we provide tools that compute the accuracy and cost of any rule set, as defined in Section~\ref{sec:problem-rules}, to let the agent evaluate a rule set.\shortonly{ We defer the detailed prompts of ablations to our technical report~\cite{scoutextended}.}

\longonly{
We describe each ablation and give its prompt below. Each replaces one or more of \sys's engineered stages with a general-purpose coding agent (Codex, backed by GPT-5.4) that is given the repository tools and left to solve the same sub-problem on its own. All three are handed the same accuracy and cost-ratio definitions as \sys (Section~\ref{sec:problem-rules}), exposed as callable tools; in the prompts below these tool interfaces appear as placeholders (e.g., \ph{Accuracy specification}, \ph{Cost ratio specification}).

\noindent{\bf \sys-AgentGen} delegates both rule generation and refinement. The agent is prompted as a document rule engineer and given the task \ph{Question}, the enriched sampled documents \ph{Documents}, and their oracle answers \ph{Answers}; it uses the repository tools to load documents, measure accuracy and cost, and iterate toward a small, cost-optimized rule set, where \ph{Accuracy specification} and \ph{Cost ratio specification} are the metric tools of Section~\ref{sec:problem-rules}.

\begin{tcolorbox}[colback=white,colframe=black!60,boxrule=0.5pt,arc=0pt,left=5pt,right=5pt,top=4pt,bottom=4pt,breakable]
\footnotesize\ttfamily
\textbf{\sys-AgentGen-Prompt:} You are a document rule engineer working in the project repository. Given a question \ph{Question}, the enriched sampled documents \ph{Documents}, and their oracle answers \ph{Answers}, generate and refine a small set of Python span-retrieval rules \texttt{def rule\_name(doc)} that return the answer-bearing spans of any unseen document from the same collection. Use every tool available (read and write files, execute Python, run shell commands, and search) to load documents, measure cost and accuracy, and iterate; do not reason in isolation.

\textbf{Objectives (in priority order):} (1)~reach merge accuracy at least $0.95$ on the sampled documents; (2)~minimize the average cost ratio (target below $0.1$), tightening high-cost rules once accuracy is met, with accuracy always winning.

\textbf{Metrics (tools):} \ph{Accuracy specification} and \ph{Cost ratio specification}, where accuracy runs the oracle on the union of the returned spans and checks it against the answer, and cost ratio is the fraction of the document's tokens retrieved. A fast substring proxy may be used to iterate cheaply before the final oracle check.

\textbf{Workflow:} study where the answer sits in each sampled document; write the broadest rule first and test its coverage and cost; diagnose uncovered documents and add a targeted rule only when it covers at least two of them; verify merge accuracy with the oracle and repeat until the target is met; finally tighten any high-cost rule without losing accuracy.

\textbf{Requirements:} each rule is self-contained (imports inside the body), must never raise, and returns an empty list when nothing matches; prefer a few extra spans over missing the answer. Locate the answer using physical location (page), semantic location (section header or path), keyword proximity, table cells, typography, structural position, and any other recurring pattern.
\end{tcolorbox}

\noindent{\bf \sys-AgentRef} delegates rule refinement. The agent is given the task \ph{Question} and the pre-generated rule pool \ph{Rules}, and selects a small subset with the repository tools; \ph{Budget} bounds the paid \texttt{verify\_accuracy} calls, and \ph{Accuracy specification}, \ph{Cost ratio specification}, and \ph{Coverage specification} are the metric tool interfaces of Section~\ref{sec:problem-rules}.

\begin{tcolorbox}[colback=white,colframe=black!60,boxrule=0.5pt,arc=0pt,left=5pt,right=5pt,top=4pt,bottom=4pt,breakable]
\footnotesize\ttfamily
\textbf{\sys-AgentRef-Prompt:} You are given a question \ph{Question} and a pre-generated rule pool \ph{Rules}. Select a small subset of rules that preserves the merge accuracy of the full pool on the sampled documents while keeping total cost low and per-rule coverage high. Invoke the repository tools below (one per call); you may write small helper scripts.

\textbf{Hard constraint:} the merge accuracy of your subset must equal that of the full pool on every sampled document the full pool answers correctly.

\textbf{Soft targets (traded against each other):} minimize the summed cost ratio of the subset; maximize the minimum per-rule coverage (broad rules generalize better); keep the subset small.

\textbf{Tools:} \texttt{list\_rules} (pool with one-line descriptions); \texttt{compute\_cost} \ph{Cost ratio specification}; \texttt{compute\_coverage} \ph{Coverage specification}; \texttt{inspect\_rule}; and the paid \texttt{verify\_accuracy} \ph{Accuracy specification}, limited to \ph{Budget} calls. The free tools use no oracle; \texttt{verify\_accuracy} runs the oracle on the sampled documents.

\textbf{Loop:} snapshot the pool with cost and coverage; propose an initial subset by coverage-per-cost, preferring rules with broad descriptions over layout-specific ones; verify accuracy; for any missed document, inspect and add the most promising covering rule, then re-verify; once the hard constraint holds, drop-test the most expensive rule and remove it if accuracy is preserved; stop when the soft targets are reasonable or the budget is exhausted.
\end{tcolorbox}

\noindent{\bf \sys-AgentE2E} delegates the entire offline pipeline (sampling, generation, and refinement) over the raw corpus. The agent is given the task \ph{Question} and the plain-text corpus \ph{Documents}, and samples, generates, and refines rules end to end; \ph{Budget} bounds the paid \texttt{verify-accuracy} calls, and \ph{Accuracy specification} and \ph{Cost ratio specification} are the metric tool interfaces of Section~\ref{sec:problem-rules}.

\begin{tcolorbox}[colback=white,colframe=black!60,boxrule=0.5pt,arc=0pt,left=5pt,right=5pt,top=4pt,bottom=4pt,breakable]
\footnotesize\ttfamily
\textbf{\sys-AgentE2E-Prompt:} You are given a question \ph{Question} and a corpus of plain-text documents \ph{Documents}. Working end to end, sample documents, generate Python retrieval rules \texttt{def rule\_name(doc)}, and refine them into a small rule set that answers the question at low cost. Use the repository tools below and keep all writes inside the rules directory.

\textbf{Tools:} \texttt{list-docs} and \texttt{read-doc} (inspect a seed sample); \texttt{compute-cost} \ph{Cost ratio specification}; \texttt{inspect-rule}; and the paid \texttt{verify-accuracy} \ph{Accuracy specification}, limited to \ph{Budget} calls. Documents are plain text, exposed as pages, paragraphs, and lines.

\textbf{Rule contract:} one file per rule, each defining a single \texttt{def rule\_name(doc)} that returns a list of span dicts (each with a \texttt{text} field) and wraps its body in \texttt{try/except} returning an empty list.

\textbf{Workflow:} inspect a small seed sample; write one to three broad rules; measure cost and verify accuracy on the working sample; expand the sample only to diagnose failures; stop when the rules are stable or further tightening hurts coverage.
\end{tcolorbox}
}


\begin{table}[t]
\centering
\footnotesize
\setlength{\tabcolsep}{4pt}
\renewcommand{\arraystretch}{1.15}
\begin{tabularx}{\columnwidth}{|l|C|C|C|C|}
\hline
\textbf{Strategy} & \textbf{Sample} & \textbf{Generate} & \textbf{Refine} & \textbf{Apply} \\
\hline
\sys          & \sys  & \sys  & \sys  & \sys \\
\hline
\sys-AgentRef & \sys  & \sys  & Agent & \sys \\
\hline
\sys-AgentGen & \sys  & \multicolumn{2}{c|}{Agent} & \sys \\
\hline
\sys-AgentE2E & \multicolumn{3}{c|}{Agent} & \sys \\
\hline
\end{tabularx}
\caption{\small \sys and its agentic ablations}
\label{tab:ablation-strategies}
\vspace{-5mm}
\end{table}


\shortonly{\vspace{-1mm}}
\subsubsection{Metrics}
\label{sec:exp-metrics}
We report three metrics per dataset: \emph{accuracy}, end-to-end \emph{latency}, and \emph{cost} of each strategy. For an extraction query $Q$ over a document collection $\mathcal{D}$, we measure each metric per document and average it over the documents, then over all extraction queries. Some strategies run in two stages: an offline stage that learns the rules, with optional sampling and rule refinement (as in \sys and its ablations), and an online stage that applies them. 
Let $\mathit{cost}_{\mathrm{off}}(Q)$ and $\mathit{cost}_{\mathrm{on}}(Q, d_j)$ be the offline cost for $Q$ and the cost of applying the rules to document $d_j$. The reported average cost amortizes the one-time offline cost across all $n=|\mathcal{D}|$ documents: $\mathit{cost}(Q) = \frac{1}{n}\sum_{d_j \in \mathcal{D}} \mathit{cost}_{\mathrm{on}}(Q, d_j) + \frac{\mathit{cost}_{\mathrm{off}}(Q)}{n}.$
The average latency $\mathit{lat}(Q)$ is defined analogously, with $\mathit{lat}_{\mathrm{off}}$ and $\mathit{lat}_{\mathrm{on}}$ in place of the costs. Baselines have no offline stage, so their second term is zero. By default we set $\alpha = 0.05$ (Problem~\ref{prob:refine}), so the refined rules' accuracy falls at most $5\%$ below the full pool's.

Given $Q$, when measuring accuracy, we measure whether the span $d_j^i$ returned by a rule $r_i$ on a document $d_j$ can reproduce the oracle answer (i.e., the ground truth), i.e., $O(d_j^i,Q) = O(d_j,Q)$. Here, we use an LLM-as-a-judge~\cite{gu2026survey,zheng2023llmjudge} to decide whether $O(d_j^i,Q)$ is equivalent to the ground truth, allowing small syntactic variations, and GPT-5.4 is used as the judge. In particular, let $I(O(d_j^i,Q), O(d_j,Q))$ be an indicator function that returns True if $O(d_j^i,Q)$ and $O(d_j,Q)$ are lexically identical or semantically equivalent. This approach effectively handles equivalent but non-identical responses. 

\subsection{Experimental Results}
\label{sec:exp-results}

\noindent{\bf Experiment 1: \sys vs.\ baselines.} Figure~\ref{fig:overall} reports the accuracy ($y$-axis) and the amortized cost ratio $\mathit{cost}(Q)$  ($x$-axis, log scale), one panel per dataset. A point in the top-left corner is ideal: high accuracy at low cost. 

\noindent{\bf Agent baselines.} \sys {\bf \em matches the accuracy} of the {\em per-document agent}, the strongest but most expensive baseline, while spending far less. Averaged over the six datasets, \sys reaches $0.9$ accuracy against $0.89$ for the per-document agent, leading on Court, NoPV, and Tropic, and trailing by at most $1.6$ points elsewhere, yet its amortized cost is {\bf \em one to two orders of magnitude lower} ($22\times$ to $552\times$). This is because the per-document agent reads each full document and issues many tool calls in multiple iterations, so its cost ratio exceeds $1$, consuming more than the whole document, whereas \sys applies cost-effective rules and invokes the oracle only on a small span. The {\em batched agent} is cheaper but only $0.85$ accurate on average, since packing all documents into one context dilutes the model's attention, and it remains $3\times$ to $88\times$ more expensive than \sys.

\noindent{\bf Code-rule baseline.} Against {\em Evaporate-Code+}, the strongest version of Evaporate variant, \sys is {\bf\em\boldmath $61$ points more accurate} ($0.90$ vs.\ $0.29$) at a comparable amortized cost. This is because Evaporate instructs LLMs to generate programs that emit answers directly with no final oracle pass, so it succeeds only when the answer is a substring of the document; queries that otherwise require synthesizing from the context fail, dropping its accuracy below $0.3$ on every dataset. 

\begin{figure*}
\centering
\includegraphics[width=\textwidth]{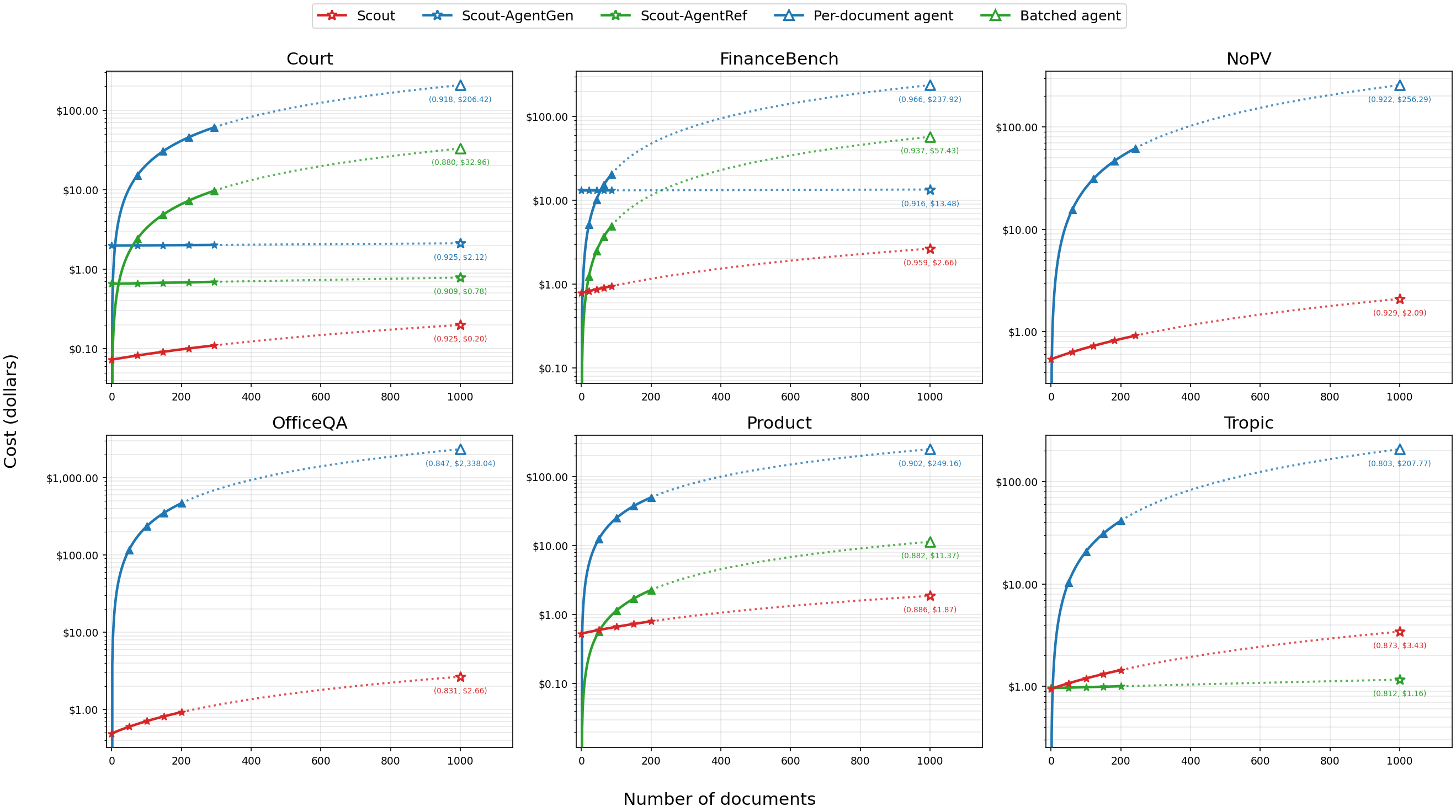}
\vspace{-8mm}
\caption{\small Total cost as the document collection grows. {\em Only strategies within $5\%$ of the per-document agent's accuracy are shown.}}
\vspace{-3mm}
\label{fig:scalability}
\end{figure*}

\noindent{\bf Experiment 2: \sys vs.\ its ablations.} We now ask whether each of \sys's own components is necessary. 


\noindent{\bf Rule refinement.} \sys-AgentRef differs from \sys only in refinement, so comparing the two isolates the contribution of \sys's refinement algorithm. At a comparable cost ratio, \sys is {\bf\em more accurate on all six datasets} (Figure~\ref{fig:overall}), by $5$ points on average. When \sys-AgentRef reaches a slightly lower cost ratio, it pays with a large accuracy drop, as on Tropic, while on Court \sys is both more accurate and cheaper. The reason is that rule refinement is NP-hard (Theorem~\ref{thm:refine-hardness}): the agent searches a large space of rule subsets heuristically without guarantees, whereas \sys solves it with a greedy algorithm that is provably near-optimal.
\noindent{\bf Rule generation.} \sys is {\bf\em more accurate and cheaper} than \sys-AgentGen. On average \sys is $6$ points more accurate, and the gap widens to $14$ points on the most heterogeneous dataset, Tropic. 
This is because \sys generates a large rule set that prioritizes accuracy over cost in the rule generation step, and then reduces cost while preserving accuracy through rule refinement. When the refined rules don't contain the answers, \sys falls back to the initial rule set, and thus generalizes well to unseen documents. In contrast, \sys-AgentGen directly instructs an agent to generate an optimized rule set with high accuracy and low cost based on the samples, which may overfit.

In terms of the cost of rule generation, \sys is cheaper because its agent generates this rule set in a few iterations, whereas \sys-AgentGen solves a harder optimization problem, maximizing accuracy while constraining the cost of the generated rules, and thus incurs significantly more iterations, increasing the cost of both program synthesis and verification. Even so, the rules returned by \sys are more effective: at matched accuracy (e.g., $0.925$ on Court), \sys's cost ratio is about an order of magnitude lower.

\begin{table*}[t]
\centering
\small
\setlength{\tabcolsep}{4pt}
\renewcommand{\arraystretch}{1.25}
\resizebox{\textwidth}{!}{%
\begin{tabular}{l*{6}{rrr}}
\toprule
 & \multicolumn{3}{c}{\textbf{Court}} & \multicolumn{3}{c}{\textbf{FinanceBench}} & \multicolumn{3}{c}{\textbf{NoPV}} & \multicolumn{3}{c}{\textbf{OfficeQA}} & \multicolumn{3}{c}{\textbf{Product}} & \multicolumn{3}{c}{\textbf{Tropic}} \\
\cmidrule(lr){2-4}\cmidrule(lr){5-7}\cmidrule(lr){8-10}\cmidrule(lr){11-13}\cmidrule(lr){14-16}\cmidrule(lr){17-19}
\textbf{Strategy} & sAcc & uAcc & $\delta$ & sAcc & uAcc & $\delta$ & sAcc & uAcc & $\delta$ & sAcc & uAcc & $\delta$ & sAcc & uAcc & $\delta$ & sAcc & uAcc & $\delta$ \\
\midrule
\sys          & 0.965 & 0.922 & $+0.043$ & 0.960 & 0.959 & $+0.001$ & 0.939 & 0.928 & $+0.011$ & 0.880 & 0.826 & $+0.054$ & 0.877 & 0.886 & $-0.009$ & 0.877 & 0.873 & $+0.004$ \\
\sys-AgentRef & 0.973 & 0.904 & $+0.069$ & 0.950 & 0.904 & $+0.046$ & 0.942 & 0.845 & $+0.097$ & 0.840 & 0.746 & $+0.094$ & 0.892 & 0.836 & $+0.057$ & 0.907 & 0.801 & $+0.106$ \\
\sys-AgentGen & 0.965 & 0.922 & $+0.044$ & 0.960 & 0.903 & $+0.057$ & 0.942 & 0.845 & $+0.097$ & 0.830 & 0.773 & $+0.057$ & 0.881 & 0.840 & $+0.041$ & 0.761 & 0.725 & $+0.035$ \\
\bottomrule
\end{tabular}%
}
\caption{\small Overfitting of the \sys-family strategies. All values are averaged over each dataset's queries. }
\label{tab:overfit}
\vspace{-5mm}
\end{table*}

\noindent{\bf Sampling.} \sys-AgentE2E delegates sampling, generation, and refinement, and is the weakest variant: it is both less accurate and more expensive than \sys on all six datasets, e.g., on Court it scores $0.81$ at cost ratio $0.31$ against $0.93$ at $0.015$ for \sys, because a general agent fails to sample representatively. Its consistent shortfall confirms that \sys's offline components, and its representative sampling in particular, are effective. 

\noindent{\bf Experiment 3: Scalability.} Figure~\ref{fig:scalability} reports the total cost in dollars (log-scale $y$-axis) of answering a query as the document collection grows, from the measured collection size up to $1{,}000$ documents (solid lines measured, dotted lines extrapolated), for \sys, its ablations, and the baselines. To make the cost comparison fair, we include only strategies within $5\%$ of the  strategy with the best accuracy, i.e., the per-document agent. As the collection grows, \sys's total cost stays {\bf \em nearly flat} while the baselines grow linearly.

\noindent{\bf Cost at scale.} At $1{,}000$ documents, \sys answers the query for a few dollars on every dataset (\$$0.20$ to \$$3.43$), whereas the per-document agent costs hundreds to thousands of dollars (\$$206$ to \$$2{,}338$), making {\bf\em\boldmath \sys\ $61\times$ to over $1000\times$ cheaper} at that scale, e.g., \$$0.20$ vs.\ \$$206$ on Court and \$$2.66$ vs.\ \$$2{,}338$ on OfficeQA. The batched agent is cheaper than the per-document agent but its cost still grows linearly, reaching \$$11$ to \$$57$ at $1{,}000$ documents, one to two orders of magnitude above \sys. This is because \sys spends only a one-time offline cost and then a tiny online cost per document, thanks to the small spans returned by the rules. The agent baselines instead read each full document (possibly in multiple passes), so their total cost rises linearly and quickly dominates \sys's fixed offline cost. Moreover, among the strategies that are cheap at scale, \sys is also the most accurate: on Tropic, \sys and \sys-AgentRef both cost only a few dollars at $1{,}000$ documents (\$$3.43$ and \$$1.16$), yet \sys is  $6$ points more accurate. 

\noindent{\bf Cost breakdown of \sys.} \sys's cost splits into a fixed one-time offline cost to learn the rules and a small  per-document online cost to apply them. 
Applying the rules to one document is cheap, only about \$$0.0016$ on average. At the measured collection sizes the offline cost is $53\%$ to $83\%$ of the total, but since it is fixed, its fraction shrinks as the collection grows: at $1{,}000$ documents it falls to $18\%$ to $37\%$, and the online cost dominates.

\noindent{\bf Experiment 4: Overfitting analysis.} \sys learns rules from a small sample, so we test whether they overfit it. Table~\ref{tab:overfit} reports the accuracy of  \sys and  its ablations on the sampled (\emph{sAcc}) and held-out (\emph{uAcc}) documents, and their gap $\delta = \text{sAcc} - \text{uAcc}$. A  larger positive $\delta$ means worse generalization. We omit \sys-AgentE2E, which samples, generates, and refines rules in a single agent call: it does not expose which documents it sampled, and forcing it to do so explicitly in the prompt might change the method. 

\sys overfits the least: its average gap is $0.017$, against $0.055$ for \sys-AgentGen and $0.078$ for \sys-AgentRef, the smallest on every dataset, while its ablations have gaps up to $0.097$ and $0.106$. \sys-AgentGen tends to overfit as it instructs agents to generate optimized rules (with high accuracy and low cost) based on the sample, whereas \sys generates a large pool of rules prioritizing accuracy, which is used as the fallback when the refined rules don't contain answers. Rules returned by \sys additionally  prioritize high coverage in the rule refinement, which alleviates overfitting.


\begin{table}[t]
\centering
\footnotesize
\setlength{\tabcolsep}{4pt}
\renewcommand{\arraystretch}{1.2}
\resizebox{\columnwidth}{!}{%
\begin{tabular}{@{}lrrrrrr@{}}
\toprule
\textbf{Dataset} & \textbf{Embed} & \textbf{Sampling} & \textbf{Rule gen.} & \textbf{Refine} & \textbf{Offline} & \textbf{Online} \\
 & (s/doc) & (s) & (s) & (s) & (s/query) & (s/doc) \\
\midrule
Court        & 0.29 & 85.3 & 40.5 & 188 & 313.8 & 3.78 \\
FinanceBench & 0.35 & 35.0 & 35.9 & 219 & 289.9 & 2.19 \\
NoPV         & 0.22 & 53.2 & 49.9 & 192 & 295.1 & 2.13 \\
Product      & 0.39 & 78.0 & 38.1 & 134 & 250.1 & 2.12 \\
Tropic       & 0.26 & 52.0 & 52.8 & 244 & 348.8 & 2.30 \\
\bottomrule
\end{tabular}%
}
\caption{\small Latency breakdown of \sys. \emph{Embed} is the per-document embedding time within sampling. All times are in seconds.}
\label{tab:latency}
\vspace{-4mm}
\end{table}

\noindent{\bf Experiment 5: Latency breakdown of \sys.} Table~\ref{tab:latency} reports the latency breakdown of \sys. The three offline steps (sampling, rule generation, and refinement) run once per query, and \emph{Offline} is their sum. \emph{Online} is the time to apply the rules to a single document.

\sys's offline latency is a few hundred seconds per query. This time is spent only once and reused across the collection. The online latency is about $2$ to $4$ seconds per document. This is several times faster than the per-document agent, which takes $8$ to $19$ seconds per document. Embedding the documents during sampling is the other one-time cost: at $0.2$ to $0.4$ seconds per document, it adds up to $35$ to $85$ seconds. 

\section{Related Work}
\label{sec:related}


We review work relevant to \sys: document extraction, LLM-powered data systems, and cost-optimized LLM data processing.

\noindent{\bf Document extraction.}
A rich line of work extracts structured values from documents; prior methods differ in what they assume or require.
Many assume documents share a rigid structure: TWIX~\cite{lin2025visual} recovers tables from a tabular template, ZenDB~\cite{zendb} and SHED~\cite{shed}  query documents that follow a hierarchical template, while Doctopus~\cite{doctopus} extracts tables under a budget, and InstrucTE~\cite{schemaie} maps heterogeneous tables into a target schema. Work on web extraction from a decade ago relies on the HTML markup of template-generated pages~\cite{roadrunner,lixto,exalg,fivatech,webinvariants}. Such approaches fail when documents do not have a consistent template or markup, whereas \sys does not pose restrictions on such rigid document structure or markups and learns where answers recur automatically. 
Program-based extraction is another alternative. Evaporate~\cite{evaporate} assumes each answer is a substring of the input and synthesizes a program to return them directly, and it fails when the answer must be inferred from context. \sys instead uses programs only to locate the answer for an LLM to read, with guarantees on the accuracy and cost of the programs it generates, and is over $61$ points more accurate.
Finally, the rest work on training an LLM for document extraction focuses on accuracy rather than cost, such as AWS Textract~\cite{textract}, Azure Document Intelligence~\cite{azuredocai}, and vision LLMs such as GPT-4 Vision~\cite{gpt4v}. Applying such models on large document collections does not scale. In contrast, machine-learning-based
 extractors~\cite{glean,layoutlm,form2seq,visualspan,parthasarathy2022landmarks} either need human input (e.g., labels) or do not transfer across domains. \sys needs neither: given only a natural-language query, it discovers where answers recur automatically and returns a small document span to the LLM,  extracting at scale with low cost. 

\noindent{\bf LLM-powered data systems.}
A growing line of systems use LLMs to answer semantic queries over unstructured data.
LOTUS~\cite{patel2025semantic} adds LLM-defined semantic operators (e.g., filters, joins, aggregations) to the relational model.
DocETL~\cite{docetl} rewrites document-processing pipelines with an agentic framework for higher accuracy; Palimpzest~\cite{palimpzest} compiles declarative AI analytics into plans trading off cost, runtime, and quality.
QUEST~\cite{quest} reduces extraction cost via index-based retrieval; AOP~\cite{aop} optimizes operator pipelines online for multi-hop queries over data lakes; ThalamusDB~\cite{thalamusdb} answers SQL with natural-language predicates over multi-modal data, such as image, audio, and text, via approximate query processing with error bounds.
Despite their differences, these systems share one step: to evaluate a semantic operator, they call an LLM to synthesize the data it needs from the source, which amounts to extracting structured data on the fly.
\sys is complementary, and can be viewed to be optimized physical operators for such systems, e.g., supporting a map operation over  documents.  


\noindent{\bf Cost-optimized LLM-powered processing.}
Another line of work reduces the cost of LLM-powered data processing.
Model cascades and routers send each input to the cheapest capable model: FrugalGPT~\cite{frugalgpt} chains models and stops once an answer is confident; Hybrid LLM~\cite{hybridllm} routes each query to a small or large model by predicted difficulty; and RouteLLM~\cite{ong2025routellm} learns such routers from preference data.
BARGAIN~\cite{zeighami2025cut} instead runs cheap models with adaptive sampling and statistical estimation to cut cost while bounding the accuracy loss relative to an oracle.
A related line answers AI queries (e.g., a SQL query with AI predicates) with lightweight proxy models over embeddings~\cite{chung2026100x}, but targets classification tasks rather than data extraction. 
These techniques choose cheaper models when accuracy is not sacrificed, but still send the full input data to an LLM, and tend to escalate to the expensive oracle on long inputs where cheap models degrade.
\sys reduces cost orthogonally: its rules exploit data similarity so that an LLM reads only a small, relevant span of each document. 
\section{Conclusion}
\label{sec:conclusion}

We presented \sys, a scalable document-extraction system that generates effective rules to locate answers with accuracy and cost guarantees. \sys samples a few representative documents; an agent then generates a set of rules that together recover almost all answers, and \sys selects a subset that preserves this accuracy at a fraction of the cost, a selection problem we prove to be NP-hard and solve greedily within a logarithmic factor of optimal. To reduce overfitting of rules learned on the sample, \sys applies the refined rules to an unseen document one at a time, checks with a cheap model whether the retrieved text suffices, and falls back to the full rule set when it does not. On real-world datasets, \sys matches the accuracy of a frontier agent while being orders of magnitude cheaper, and far outperforms prior program-based systems. 

\clearpage

\end{document}